\newcommand{\AtLeastOne}{\textsf{AtLeastOne}}
\newcommand{\AtMostOne}{\textsf{AtMostOne}}
\newcommand{\Equal}{\textsf{Equal}}
\definecolor{violet}{RGB}{138,43,226}
\definecolor{forestgreen}{RGB}{34,139,34}
\definecolor{darkblue}{RGB}{102,0,204}
\definecolor{pink}{RGB}{255,192,20}
\definecolor{gold}{RGB}{255,215,0}
\definecolor{cola}{rgb}{0.1, 0.3, 0.6}
\definecolor{colb}{rgb}{0.95, 0.95, 0.95}
\definecolor{colc}{rgb}{0.9, 0.1, 0.4}
\definecolor{cold}{rgb}{0.6, 0.9, 0}
\definecolor{cole}{rgb}{0, 0.7, 0.8}
\definecolor{colf}{rgb}{0.9, 0.6, 0}
\title{Avoiding Monochromatic Rectangles Using Shift Patterns}
\author{Zhenjun Liu  \\
\small Carnegie Mellon University \\
\small {\tt zhenjunl@andrew.cmu.edu}  \and
Leroy Chew\\ 
\small Carnegie Mellon University\\
\small {\tt lchew@andrew.cmu.edu} \and
Marijn J. H. Heule\\
\small Carnegie Mellon University\\
\small {\tt marijn@cmu.edu}
}
\begin{document}

\maketitle
% subject classifications? 

\abstract{

    Ramsey Theory deals with avoiding certain patterns. When constructing an instance that avoids one pattern, it is observed that other patterns emerge. 
%    Avoiding certain patterns is one of the central themes of Ramsey Theory. When constructing an instance that avoids specified patterns, it is observed that other patterns emerge. 
    For example, repetition emerges when avoiding arithmetic progression (Van der Waerden numbers), while reflection emerges when avoiding monochromatic solutions of 
    $a+b=c$ (Schur numbers).    
    We exploit observed patterns when coloring a grid while avoiding monochromatic rectangles. Like many problems in Ramsey Theory, this problem has a rapidly growing search
    space that makes computer search difficult. 
    Steinbach et al. obtained a solution of an 18 by 18 grid with 4 colors by enforcing a rotation symmetry. 
    However that symmetry is not suitable for 5 colors. 
    
    In this article, we will encode this problem into propositional logic and enforce so-called internal symmetries, which preserves satisfiability, to guide SAT-solving. 
    We first observe patterns with 2 and 3 colors, among which the ``shift pattern" can be easily generalized and efficiently encoded. 
    Using this pattern, we obtain a new solution of the 18 by 18 grid that is non-isomorphic to the known solution. 
    We further analyze the pattern and obtain necessary conditions to further trim down the search space. 
    We conclude with our attempts on finding a 5-coloring of a 26 by 26 grid, as well as further open problems on the shift pattern.
}

\section{Introduction}
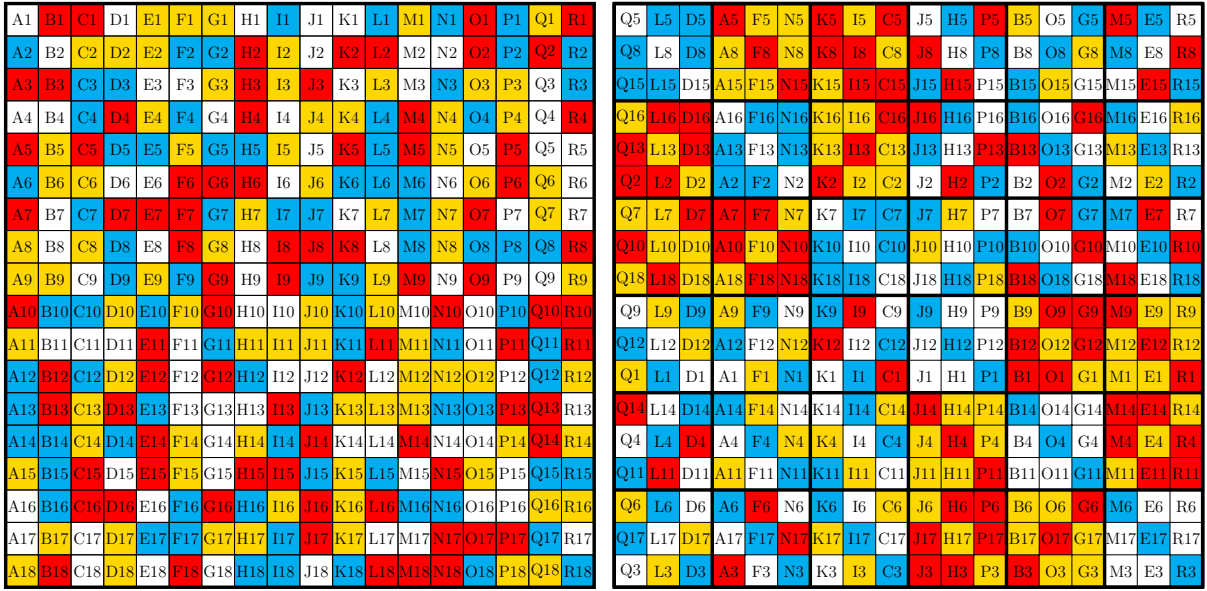
\begin{figure}
    \scalebox{0.43}{
    \begin{tikzpicture}[every node/.style={minimum size=1cm-\pgflinewidth, outer sep=0pt}]
\node[fill=white] at (0.5,17.5){\Large \!A1\!};
\node[fill=cyan] at (0.5,16.5){\Large \!A2\!};
\node[fill=red] at (0.5,15.5){\Large \!A3\!};
\node[fill=white] at (0.5,14.5){\Large \!A4\!};
\node[fill=red] at (0.5,13.5){\Large \!A5\!};
\node[fill=cyan] at (0.5,12.5){\Large \!A6\!};
\node[fill=red] at (0.5,11.5){\Large \!A7\!};
\node[fill=gold] at (0.5,10.5){\Large \!A8\!};
\node[fill=gold] at (0.5,9.5){\Large \!A9\!};
\node[fill=red] at (0.5,8.5){\Large \!A10\!};
\node[fill=gold] at (0.5,7.5){\Large \!A11\!};
\node[fill=cyan] at (0.5,6.5){\Large \!A12\!};
\node[fill=cyan] at (0.5,5.5){\Large \!A13\!};
\node[fill=cyan] at (0.5,4.5){\Large \!A14\!};
\node[fill=gold] at (0.5,3.5){\Large \!A15\!};
\node[fill=white] at (0.5,2.5){\Large \!A16\!};
\node[fill=white] at (0.5,1.5){\Large \!A17\!};
\node[fill=gold] at (0.5,0.5){\Large \!A18\!};
\node[fill=red] at (1.5,17.5){\Large \!B1\!};
\node[fill=white] at (1.5,16.5){\Large \!B2\!};
\node[fill=red] at (1.5,15.5){\Large \!B3\!};
\node[fill=white] at (1.5,14.5){\Large \!B4\!};
\node[fill=gold] at (1.5,13.5){\Large \!B5\!};
\node[fill=gold] at (1.5,12.5){\Large \!B6\!};
\node[fill=white] at (1.5,11.5){\Large \!B7\!};
\node[fill=white] at (1.5,10.5){\Large \!B8\!};
\node[fill=gold] at (1.5,9.5){\Large \!B9\!};
\node[fill=cyan] at (1.5,8.5){\Large \!B10\!};
\node[fill=white] at (1.5,7.5){\Large \!B11\!};
\node[fill=red] at (1.5,6.5){\Large \!B12\!};
\node[fill=red] at (1.5,5.5){\Large \!B13\!};
\node[fill=cyan] at (1.5,4.5){\Large \!B14\!};
\node[fill=cyan] at (1.5,3.5){\Large \!B15\!};
\node[fill=cyan] at (1.5,2.5){\Large \!B16\!};
\node[fill=gold] at (1.5,1.5){\Large \!B17\!};
\node[fill=red] at (1.5,0.5){\Large \!B18\!};
\node[fill=red] at (2.5,17.5){\Large \!C1\!};
\node[fill=gold] at (2.5,16.5){\Large \!C2\!};
\node[fill=cyan] at (2.5,15.5){\Large \!C3\!};
\node[fill=cyan] at (2.5,14.5){\Large \!C4\!};
\node[fill=red] at (2.5,13.5){\Large \!C5\!};
\node[fill=gold] at (2.5,12.5){\Large \!C6\!};
\node[fill=cyan] at (2.5,11.5){\Large \!C7\!};
\node[fill=gold] at (2.5,10.5){\Large \!C8\!};
\node[fill=white] at (2.5,9.5){\Large \!C9\!};
\node[fill=cyan] at (2.5,8.5){\Large \!C10\!};
\node[fill=white] at (2.5,7.5){\Large \!C11\!};
\node[fill=cyan] at (2.5,6.5){\Large \!C12\!};
\node[fill=gold] at (2.5,5.5){\Large \!C13\!};
\node[fill=gold] at (2.5,4.5){\Large \!C14\!};
\node[fill=red] at (2.5,3.5){\Large \!C15\!};
\node[fill=red] at (2.5,2.5){\Large \!C16\!};
\node[fill=white] at (2.5,1.5){\Large \!C17\!};
\node[fill=white] at (2.5,0.5){\Large \!C18\!};
\node[fill=white] at (3.5,17.5){\Large \!D1\!};
\node[fill=gold] at (3.5,16.5){\Large \!D2\!};
\node[fill=cyan] at (3.5,15.5){\Large \!D3\!};
\node[fill=red] at (3.5,14.5){\Large \!D4\!};
\node[fill=cyan] at (3.5,13.5){\Large \!D5\!};
\node[fill=white] at (3.5,12.5){\Large \!D6\!};
\node[fill=red] at (3.5,11.5){\Large \!D7\!};
\node[fill=cyan] at (3.5,10.5){\Large \!D8\!};
\node[fill=cyan] at (3.5,9.5){\Large \!D9\!};
\node[fill=gold] at (3.5,8.5){\Large \!D10\!};
\node[fill=white] at (3.5,7.5){\Large \!D11\!};
\node[fill=gold] at (3.5,6.5){\Large \!D12\!};
\node[fill=red] at (3.5,5.5){\Large \!D13\!};
\node[fill=cyan] at (3.5,4.5){\Large \!D14\!};
\node[fill=white] at (3.5,3.5){\Large \!D15\!};
\node[fill=red] at (3.5,2.5){\Large \!D16\!};
\node[fill=gold] at (3.5,1.5){\Large \!D17\!};
\node[fill=gold] at (3.5,0.5){\Large \!D18\!};
\node[fill=gold] at (4.5,17.5){\Large \!E1\!};
\node[fill=gold] at (4.5,16.5){\Large \!E2\!};
\node[fill=white] at (4.5,15.5){\Large \!E3\!};
\node[fill=gold] at (4.5,14.5){\Large \!E4\!};
\node[fill=cyan] at (4.5,13.5){\Large \!E5\!};
\node[fill=white] at (4.5,12.5){\Large \!E6\!};
\node[fill=red] at (4.5,11.5){\Large \!E7\!};
\node[fill=white] at (4.5,10.5){\Large \!E8\!};
\node[fill=gold] at (4.5,9.5){\Large \!E9\!};
\node[fill=cyan] at (4.5,8.5){\Large \!E10\!};
\node[fill=red] at (4.5,7.5){\Large \!E11\!};
\node[fill=red] at (4.5,6.5){\Large \!E12\!};
\node[fill=cyan] at (4.5,5.5){\Large \!E13\!};
\node[fill=red] at (4.5,4.5){\Large \!E14\!};
\node[fill=red] at (4.5,3.5){\Large \!E15\!};
\node[fill=white] at (4.5,2.5){\Large \!E16\!};
\node[fill=cyan] at (4.5,1.5){\Large \!E17\!};
\node[fill=white] at (4.5,0.5){\Large \!E18\!};
\node[fill=gold] at (5.5,17.5){\Large \!F1\!};
\node[fill=cyan] at (5.5,16.5){\Large \!F2\!};
\node[fill=white] at (5.5,15.5){\Large \!F3\!};
\node[fill=cyan] at (5.5,14.5){\Large \!F4\!};
\node[fill=gold] at (5.5,13.5){\Large \!F5\!};
\node[fill=red] at (5.5,12.5){\Large \!F6\!};
\node[fill=red] at (5.5,11.5){\Large \!F7\!};
\node[fill=red] at (5.5,10.5){\Large \!F8\!};
\node[fill=cyan] at (5.5,9.5){\Large \!F9\!};
\node[fill=gold] at (5.5,8.5){\Large \!F10\!};
\node[fill=white] at (5.5,7.5){\Large \!F11\!};
\node[fill=white] at (5.5,6.5){\Large \!F12\!};
\node[fill=white] at (5.5,5.5){\Large \!F13\!};
\node[fill=gold] at (5.5,4.5){\Large \!F14\!};
\node[fill=gold] at (5.5,3.5){\Large \!F15\!};
\node[fill=cyan] at (5.5,2.5){\Large \!F16\!};
\node[fill=cyan] at (5.5,1.5){\Large \!F17\!};
\node[fill=red] at (5.5,0.5){\Large \!F18\!};
\node[fill=gold] at (6.5,17.5){\Large \!G1\!};
\node[fill=cyan] at (6.5,16.5){\Large \!G2\!};
\node[fill=gold] at (6.5,15.5){\Large \!G3\!};
\node[fill=white] at (6.5,14.5){\Large \!G4\!};
\node[fill=cyan] at (6.5,13.5){\Large \!G5\!};
\node[fill=red] at (6.5,12.5){\Large \!G6\!};
\node[fill=cyan] at (6.5,11.5){\Large \!G7\!};
\node[fill=gold] at (6.5,10.5){\Large \!G8\!};
\node[fill=red] at (6.5,9.5){\Large \!G9\!};
\node[fill=red] at (6.5,8.5){\Large \!G10\!};
\node[fill=cyan] at (6.5,7.5){\Large \!G11\!};
\node[fill=red] at (6.5,6.5){\Large \!G12\!};
\node[fill=white] at (6.5,5.5){\Large \!G13\!};
\node[fill=white] at (6.5,4.5){\Large \!G14\!};
\node[fill=white] at (6.5,3.5){\Large \!G15\!};
\node[fill=red] at (6.5,2.5){\Large \!G16\!};
\node[fill=gold] at (6.5,1.5){\Large \!G17\!};
\node[fill=white] at (6.5,0.5){\Large \!G18\!};
\node[fill=white] at (7.5,17.5){\Large \!H1\!};
\node[fill=red] at (7.5,16.5){\Large \!H2\!};
\node[fill=red] at (7.5,15.5){\Large \!H3\!};
\node[fill=red] at (7.5,14.5){\Large \!H4\!};
\node[fill=cyan] at (7.5,13.5){\Large \!H5\!};
\node[fill=red] at (7.5,12.5){\Large \!H6\!};
\node[fill=gold] at (7.5,11.5){\Large \!H7\!};
\node[fill=white] at (7.5,10.5){\Large \!H8\!};
\node[fill=white] at (7.5,9.5){\Large \!H9\!};
\node[fill=white] at (7.5,8.5){\Large \!H10\!};
\node[fill=gold] at (7.5,7.5){\Large \!H11\!};
\node[fill=cyan] at (7.5,6.5){\Large \!H12\!};
\node[fill=white] at (7.5,5.5){\Large \!H13\!};
\node[fill=gold] at (7.5,4.5){\Large \!H14\!};
\node[fill=red] at (7.5,3.5){\Large \!H15\!};
\node[fill=cyan] at (7.5,2.5){\Large \!H16\!};
\node[fill=gold] at (7.5,1.5){\Large \!H17\!};
\node[fill=cyan] at (7.5,0.5){\Large \!H18\!};
\node[fill=cyan] at (8.5,17.5){\Large \!I1\!};
\node[fill=gold] at (8.5,16.5){\Large \!I2\!};
\node[fill=gold] at (8.5,15.5){\Large \!I3\!};
\node[fill=white] at (8.5,14.5){\Large \!I4\!};
\node[fill=gold] at (8.5,13.5){\Large \!I5\!};
\node[fill=white] at (8.5,12.5){\Large \!I6\!};
\node[fill=cyan] at (8.5,11.5){\Large \!I7\!};
\node[fill=red] at (8.5,10.5){\Large \!I8\!};
\node[fill=red] at (8.5,9.5){\Large \!I9\!};
\node[fill=white] at (8.5,8.5){\Large \!I10\!};
\node[fill=gold] at (8.5,7.5){\Large \!I11\!};
\node[fill=white] at (8.5,6.5){\Large \!I12\!};
\node[fill=red] at (8.5,5.5){\Large \!I13\!};
\node[fill=cyan] at (8.5,4.5){\Large \!I14\!};
\node[fill=red] at (8.5,3.5){\Large \!I15\!};
\node[fill=gold] at (8.5,2.5){\Large \!I16\!};
\node[fill=cyan] at (8.5,1.5){\Large \!I17\!};
\node[fill=cyan] at (8.5,0.5){\Large \!I18\!};
\node[fill=white] at (9.5,17.5){\Large \!J1\!};
\node[fill=white] at (9.5,16.5){\Large \!J2\!};
\node[fill=red] at (9.5,15.5){\Large \!J3\!};
\node[fill=gold] at (9.5,14.5){\Large \!J4\!};
\node[fill=white] at (9.5,13.5){\Large \!J5\!};
\node[fill=gold] at (9.5,12.5){\Large \!J6\!};
\node[fill=cyan] at (9.5,11.5){\Large \!J7\!};
\node[fill=red] at (9.5,10.5){\Large \!J8\!};
\node[fill=cyan] at (9.5,9.5){\Large \!J9\!};
\node[fill=gold] at (9.5,8.5){\Large \!J10\!};
\node[fill=gold] at (9.5,7.5){\Large \!J11\!};
\node[fill=white] at (9.5,6.5){\Large \!J12\!};
\node[fill=cyan] at (9.5,5.5){\Large \!J13\!};
\node[fill=red] at (9.5,4.5){\Large \!J14\!};
\node[fill=cyan] at (9.5,3.5){\Large \!J15\!};
\node[fill=red] at (9.5,2.5){\Large \!J16\!};
\node[fill=red] at (9.5,1.5){\Large \!J17\!};
\node[fill=white] at (9.5,0.5){\Large \!J18\!};
\node[fill=white] at (10.5,17.5){\Large \!K1\!};
\node[fill=red] at (10.5,16.5){\Large \!K2\!};
\node[fill=white] at (10.5,15.5){\Large \!K3\!};
\node[fill=gold] at (10.5,14.5){\Large \!K4\!};
\node[fill=red] at (10.5,13.5){\Large \!K5\!};
\node[fill=cyan] at (10.5,12.5){\Large \!K6\!};
\node[fill=white] at (10.5,11.5){\Large \!K7\!};
\node[fill=red] at (10.5,10.5){\Large \!K8\!};
\node[fill=cyan] at (10.5,9.5){\Large \!K9\!};
\node[fill=cyan] at (10.5,8.5){\Large \!K10\!};
\node[fill=cyan] at (10.5,7.5){\Large \!K11\!};
\node[fill=red] at (10.5,6.5){\Large \!K12\!};
\node[fill=gold] at (10.5,5.5){\Large \!K13\!};
\node[fill=white] at (10.5,4.5){\Large \!K14\!};
\node[fill=gold] at (10.5,3.5){\Large \!K15\!};
\node[fill=gold] at (10.5,2.5){\Large \!K16\!};
\node[fill=gold] at (10.5,1.5){\Large \!K17\!};
\node[fill=cyan] at (10.5,0.5){\Large \!K18\!};
\node[fill=cyan] at (11.5,17.5){\Large \!L1\!};
\node[fill=red] at (11.5,16.5){\Large \!L2\!};
\node[fill=gold] at (11.5,15.5){\Large \!L3\!};
\node[fill=cyan] at (11.5,14.5){\Large \!L4\!};
\node[fill=cyan] at (11.5,13.5){\Large \!L5\!};
\node[fill=cyan] at (11.5,12.5){\Large \!L6\!};
\node[fill=gold] at (11.5,11.5){\Large \!L7\!};
\node[fill=white] at (11.5,10.5){\Large \!L8\!};
\node[fill=gold] at (11.5,9.5){\Large \!L9\!};
\node[fill=gold] at (11.5,8.5){\Large \!L10\!};
\node[fill=red] at (11.5,7.5){\Large \!L11\!};
\node[fill=white] at (11.5,6.5){\Large \!L12\!};
\node[fill=gold] at (11.5,5.5){\Large \!L13\!};
\node[fill=white] at (11.5,4.5){\Large \!L14\!};
\node[fill=cyan] at (11.5,3.5){\Large \!L15\!};
\node[fill=red] at (11.5,2.5){\Large \!L16\!};
\node[fill=white] at (11.5,1.5){\Large \!L17\!};
\node[fill=red] at (11.5,0.5){\Large \!L18\!};
\node[fill=gold] at (12.5,17.5){\Large \!M1\!};
\node[fill=white] at (12.5,16.5){\Large \!M2\!};
\node[fill=white] at (12.5,15.5){\Large \!M3\!};
\node[fill=red] at (12.5,14.5){\Large \!M4\!};
\node[fill=red] at (12.5,13.5){\Large \!M5\!};
\node[fill=cyan] at (12.5,12.5){\Large \!M6\!};
\node[fill=cyan] at (12.5,11.5){\Large \!M7\!};
\node[fill=cyan] at (12.5,10.5){\Large \!M8\!};
\node[fill=red] at (12.5,9.5){\Large \!M9\!};
\node[fill=white] at (12.5,8.5){\Large \!M10\!};
\node[fill=gold] at (12.5,7.5){\Large \!M11\!};
\node[fill=gold] at (12.5,6.5){\Large \!M12\!};
\node[fill=gold] at (12.5,5.5){\Large \!M13\!};
\node[fill=red] at (12.5,4.5){\Large \!M14\!};
\node[fill=white] at (12.5,3.5){\Large \!M15\!};
\node[fill=cyan] at (12.5,2.5){\Large \!M16\!};
\node[fill=white] at (12.5,1.5){\Large \!M17\!};
\node[fill=red] at (12.5,0.5){\Large \!M18\!};
\node[fill=cyan] at (13.5,17.5){\Large \!N1\!};
\node[fill=white] at (13.5,16.5){\Large \!N2\!};
\node[fill=cyan] at (13.5,15.5){\Large \!N3\!};
\node[fill=gold] at (13.5,14.5){\Large \!N4\!};
\node[fill=gold] at (13.5,13.5){\Large \!N5\!};
\node[fill=white] at (13.5,12.5){\Large \!N6\!};
\node[fill=gold] at (13.5,11.5){\Large \!N7\!};
\node[fill=gold] at (13.5,10.5){\Large \!N8\!};
\node[fill=white] at (13.5,9.5){\Large \!N9\!};
\node[fill=red] at (13.5,8.5){\Large \!N10\!};
\node[fill=cyan] at (13.5,7.5){\Large \!N11\!};
\node[fill=gold] at (13.5,6.5){\Large \!N12\!};
\node[fill=cyan] at (13.5,5.5){\Large \!N13\!};
\node[fill=white] at (13.5,4.5){\Large \!N14\!};
\node[fill=red] at (13.5,3.5){\Large \!N15\!};
\node[fill=cyan] at (13.5,2.5){\Large \!N16\!};
\node[fill=red] at (13.5,1.5){\Large \!N17\!};
\node[fill=red] at (13.5,0.5){\Large \!N18\!};
\node[fill=red] at (14.5,17.5){\Large \!O1\!};
\node[fill=red] at (14.5,16.5){\Large \!O2\!};
\node[fill=gold] at (14.5,15.5){\Large \!O3\!};
\node[fill=cyan] at (14.5,14.5){\Large \!O4\!};
\node[fill=white] at (14.5,13.5){\Large \!O5\!};
\node[fill=gold] at (14.5,12.5){\Large \!O6\!};
\node[fill=red] at (14.5,11.5){\Large \!O7\!};
\node[fill=cyan] at (14.5,10.5){\Large \!O8\!};
\node[fill=red] at (14.5,9.5){\Large \!O9\!};
\node[fill=white] at (14.5,8.5){\Large \!O10\!};
\node[fill=white] at (14.5,7.5){\Large \!O11\!};
\node[fill=gold] at (14.5,6.5){\Large \!O12\!};
\node[fill=cyan] at (14.5,5.5){\Large \!O13\!};
\node[fill=white] at (14.5,4.5){\Large \!O14\!};
\node[fill=gold] at (14.5,3.5){\Large \!O15\!};
\node[fill=white] at (14.5,2.5){\Large \!O16\!};
\node[fill=red] at (14.5,1.5){\Large \!O17\!};
\node[fill=cyan] at (14.5,0.5){\Large \!O18\!};
\node[fill=cyan] at (15.5,17.5){\Large \!P1\!};
\node[fill=cyan] at (15.5,16.5){\Large \!P2\!};
\node[fill=gold] at (15.5,15.5){\Large \!P3\!};
\node[fill=gold] at (15.5,14.5){\Large \!P4\!};
\node[fill=red] at (15.5,13.5){\Large \!P5\!};
\node[fill=red] at (15.5,12.5){\Large \!P6\!};
\node[fill=white] at (15.5,11.5){\Large \!P7\!};
\node[fill=cyan] at (15.5,10.5){\Large \!P8\!};
\node[fill=white] at (15.5,9.5){\Large \!P9\!};
\node[fill=cyan] at (15.5,8.5){\Large \!P10\!};
\node[fill=red] at (15.5,7.5){\Large \!P11\!};
\node[fill=white] at (15.5,6.5){\Large \!P12\!};
\node[fill=red] at (15.5,5.5){\Large \!P13\!};
\node[fill=gold] at (15.5,4.5){\Large \!P14\!};
\node[fill=white] at (15.5,3.5){\Large \!P15\!};
\node[fill=white] at (15.5,2.5){\Large \!P16\!};
\node[fill=red] at (15.5,1.5){\Large \!P17\!};
\node[fill=gold] at (15.5,0.5){\Large \!P18\!};
\node[fill=gold] at (16.5,17.5){\Large \!Q1\!};
\node[fill=red] at (16.5,16.5){\Large \!Q2\!};
\node[fill=white] at (16.5,15.5){\Large \!Q3\!};
\node[fill=white] at (16.5,14.5){\Large \!Q4\!};
\node[fill=white] at (16.5,13.5){\Large \!Q5\!};
\node[fill=gold] at (16.5,12.5){\Large \!Q6\!};
\node[fill=gold] at (16.5,11.5){\Large \!Q7\!};
\node[fill=cyan] at (16.5,10.5){\Large \!Q8\!};
\node[fill=white] at (16.5,9.5){\Large \!Q9\!};
\node[fill=red] at (16.5,8.5){\Large \!Q10\!};
\node[fill=cyan] at (16.5,7.5){\Large \!Q11\!};
\node[fill=cyan] at (16.5,6.5){\Large \!Q12\!};
\node[fill=red] at (16.5,5.5){\Large \!Q13\!};
\node[fill=red] at (16.5,4.5){\Large \!Q14\!};
\node[fill=cyan] at (16.5,3.5){\Large \!Q15\!};
\node[fill=gold] at (16.5,2.5){\Large \!Q16\!};
\node[fill=cyan] at (16.5,1.5){\Large \!Q17\!};
\node[fill=gold] at (16.5,0.5){\Large \!Q18\!};
\node[fill=red] at (17.5,17.5){\Large \!R1\!};
\node[fill=cyan] at (17.5,16.5){\Large \!R2\!};
\node[fill=cyan] at (17.5,15.5){\Large \!R3\!};
\node[fill=red] at (17.5,14.5){\Large \!R4\!};
\node[fill=white] at (17.5,13.5){\Large \!R5\!};
\node[fill=white] at (17.5,12.5){\Large \!R6\!};
\node[fill=white] at (17.5,11.5){\Large \!R7\!};
\node[fill=red] at (17.5,10.5){\Large \!R8\!};
\node[fill=gold] at (17.5,9.5){\Large \!R9\!};
\node[fill=red] at (17.5,8.5){\Large \!R10\!};
\node[fill=red] at (17.5,7.5){\Large \!R11\!};
\node[fill=gold] at (17.5,6.5){\Large \!R12\!};
\node[fill=white] at (17.5,5.5){\Large \!R13\!};
\node[fill=gold] at (17.5,4.5){\Large \!R14\!};
\node[fill=cyan] at (17.5,3.5){\Large \!R15\!};
\node[fill=gold] at (17.5,2.5){\Large \!R16\!};
\node[fill=white] at (17.5,1.5){\Large \!R17\!};
\node[fill=cyan] at (17.5,0.5){\Large \!R18\!};
\draw[thick, step=1cm, color=black] (0,0) grid (18, 18);
\draw[thick, step=1cm, color=black] (0,0) grid (18, 18);
\draw[line width=3pt, color=black] (0,0) -- (0,18) -- (18,18) -- (18,0) --cycle;
\end{tikzpicture}
}
% \hspace{0.25cm}
\scalebox{0.43}{
\begin{tikzpicture}[every node/.style={minimum size=1cm-\pgflinewidth, outer sep=0pt}]
\node[fill=white] at (0.5,17.5){\Large \!Q5\!};
\node[fill=cyan] at (0.5,16.5){\Large \!Q8\!};
\node[fill=cyan] at (0.5,15.5){\Large \!Q15\!};
\node[fill=gold] at (0.5,14.5){\Large \!Q16\!};
\node[fill=red] at (0.5,13.5){\Large \!Q13\!};
\node[fill=red] at (0.5,12.5){\Large \!Q2\!};
\node[fill=gold] at (0.5,11.5){\Large \!Q7\!};
\node[fill=red] at (0.5,10.5){\Large \!Q10\!};
\node[fill=gold] at (0.5,9.5){\Large \!Q18\!};
\node[fill=white] at (0.5,8.5){\Large \!Q9\!};
\node[fill=cyan] at (0.5,7.5){\Large \!Q12\!};
\node[fill=gold] at (0.5,6.5){\Large \!Q1\!};
\node[fill=red] at (0.5,5.5){\Large \!Q14\!};
\node[fill=white] at (0.5,4.5){\Large \!Q4\!};
\node[fill=cyan] at (0.5,3.5){\Large \!Q11\!};
\node[fill=gold] at (0.5,2.5){\Large \!Q6\!};
\node[fill=cyan] at (0.5,1.5){\Large \!Q17\!};
\node[fill=white] at (0.5,0.5){\Large \!Q3\!};
\node[fill=cyan] at (1.5,17.5){\Large \!L5\!};
\node[fill=white] at (1.5,16.5){\Large \!L8\!};
\node[fill=cyan] at (1.5,15.5){\Large \!L15\!};
\node[fill=red] at (1.5,14.5){\Large \!L16\!};
\node[fill=gold] at (1.5,13.5){\Large \!L13\!};
\node[fill=red] at (1.5,12.5){\Large \!L2\!};
\node[fill=gold] at (1.5,11.5){\Large \!L7\!};
\node[fill=gold] at (1.5,10.5){\Large \!L10\!};
\node[fill=red] at (1.5,9.5){\Large \!L18\!};
\node[fill=gold] at (1.5,8.5){\Large \!L9\!};
\node[fill=white] at (1.5,7.5){\Large \!L12\!};
\node[fill=cyan] at (1.5,6.5){\Large \!L1\!};
\node[fill=white] at (1.5,5.5){\Large \!L14\!};
\node[fill=cyan] at (1.5,4.5){\Large \!L4\!};
\node[fill=red] at (1.5,3.5){\Large \!L11\!};
\node[fill=cyan] at (1.5,2.5){\Large \!L6\!};
\node[fill=white] at (1.5,1.5){\Large \!L17\!};
\node[fill=gold] at (1.5,0.5){\Large \!L3\!};
\node[fill=cyan] at (2.5,17.5){\Large \!D5\!};
\node[fill=cyan] at (2.5,16.5){\Large \!D8\!};
\node[fill=white] at (2.5,15.5){\Large \!D15\!};
\node[fill=red] at (2.5,14.5){\Large \!D16\!};
\node[fill=red] at (2.5,13.5){\Large \!D13\!};
\node[fill=gold] at (2.5,12.5){\Large \!D2\!};
\node[fill=red] at (2.5,11.5){\Large \!D7\!};
\node[fill=gold] at (2.5,10.5){\Large \!D10\!};
\node[fill=gold] at (2.5,9.5){\Large \!D18\!};
\node[fill=cyan] at (2.5,8.5){\Large \!D9\!};
\node[fill=gold] at (2.5,7.5){\Large \!D12\!};
\node[fill=white] at (2.5,6.5){\Large \!D1\!};
\node[fill=cyan] at (2.5,5.5){\Large \!D14\!};
\node[fill=red] at (2.5,4.5){\Large \!D4\!};
\node[fill=white] at (2.5,3.5){\Large \!D11\!};
\node[fill=white] at (2.5,2.5){\Large \!D6\!};
\node[fill=gold] at (2.5,1.5){\Large \!D17\!};
\node[fill=cyan] at (2.5,0.5){\Large \!D3\!};
\node[fill=red] at (3.5,17.5){\Large \!A5\!};
\node[fill=gold] at (3.5,16.5){\Large \!A8\!};
\node[fill=gold] at (3.5,15.5){\Large \!A15\!};
\node[fill=white] at (3.5,14.5){\Large \!A16\!};
\node[fill=cyan] at (3.5,13.5){\Large \!A13\!};
\node[fill=cyan] at (3.5,12.5){\Large \!A2\!};
\node[fill=red] at (3.5,11.5){\Large \!A7\!};
\node[fill=red] at (3.5,10.5){\Large \!A10\!};
\node[fill=gold] at (3.5,9.5){\Large \!A18\!};
\node[fill=gold] at (3.5,8.5){\Large \!A9\!};
\node[fill=cyan] at (3.5,7.5){\Large \!A12\!};
\node[fill=white] at (3.5,6.5){\Large \!A1\!};
\node[fill=cyan] at (3.5,5.5){\Large \!A14\!};
\node[fill=white] at (3.5,4.5){\Large \!A4\!};
\node[fill=gold] at (3.5,3.5){\Large \!A11\!};
\node[fill=cyan] at (3.5,2.5){\Large \!A6\!};
\node[fill=white] at (3.5,1.5){\Large \!A17\!};
\node[fill=red] at (3.5,0.5){\Large \!A3\!};
\node[fill=gold] at (4.5,17.5){\Large \!F5\!};
\node[fill=red] at (4.5,16.5){\Large \!F8\!};
\node[fill=gold] at (4.5,15.5){\Large \!F15\!};
\node[fill=cyan] at (4.5,14.5){\Large \!F16\!};
\node[fill=white] at (4.5,13.5){\Large \!F13\!};
\node[fill=cyan] at (4.5,12.5){\Large \!F2\!};
\node[fill=red] at (4.5,11.5){\Large \!F7\!};
\node[fill=gold] at (4.5,10.5){\Large \!F10\!};
\node[fill=red] at (4.5,9.5){\Large \!F18\!};
\node[fill=cyan] at (4.5,8.5){\Large \!F9\!};
\node[fill=white] at (4.5,7.5){\Large \!F12\!};
\node[fill=gold] at (4.5,6.5){\Large \!F1\!};
\node[fill=gold] at (4.5,5.5){\Large \!F14\!};
\node[fill=cyan] at (4.5,4.5){\Large \!F4\!};
\node[fill=white] at (4.5,3.5){\Large \!F11\!};
\node[fill=red] at (4.5,2.5){\Large \!F6\!};
\node[fill=cyan] at (4.5,1.5){\Large \!F17\!};
\node[fill=white] at (4.5,0.5){\Large \!F3\!};
\node[fill=gold] at (5.5,17.5){\Large \!N5\!};
\node[fill=gold] at (5.5,16.5){\Large \!N8\!};
\node[fill=red] at (5.5,15.5){\Large \!N15\!};
\node[fill=cyan] at (5.5,14.5){\Large \!N16\!};
\node[fill=cyan] at (5.5,13.5){\Large \!N13\!};
\node[fill=white] at (5.5,12.5){\Large \!N2\!};
\node[fill=gold] at (5.5,11.5){\Large \!N7\!};
\node[fill=red] at (5.5,10.5){\Large \!N10\!};
\node[fill=red] at (5.5,9.5){\Large \!N18\!};
\node[fill=white] at (5.5,8.5){\Large \!N9\!};
\node[fill=gold] at (5.5,7.5){\Large \!N12\!};
\node[fill=cyan] at (5.5,6.5){\Large \!N1\!};
\node[fill=white] at (5.5,5.5){\Large \!N14\!};
\node[fill=gold] at (5.5,4.5){\Large \!N4\!};
\node[fill=cyan] at (5.5,3.5){\Large \!N11\!};
\node[fill=white] at (5.5,2.5){\Large \!N6\!};
\node[fill=red] at (5.5,1.5){\Large \!N17\!};
\node[fill=cyan] at (5.5,0.5){\Large \!N3\!};
\node[fill=red] at (6.5,17.5){\Large \!K5\!};
\node[fill=red] at (6.5,16.5){\Large \!K8\!};
\node[fill=gold] at (6.5,15.5){\Large \!K15\!};
\node[fill=gold] at (6.5,14.5){\Large \!K16\!};
\node[fill=gold] at (6.5,13.5){\Large \!K13\!};
\node[fill=red] at (6.5,12.5){\Large \!K2\!};
\node[fill=white] at (6.5,11.5){\Large \!K7\!};
\node[fill=cyan] at (6.5,10.5){\Large \!K10\!};
\node[fill=cyan] at (6.5,9.5){\Large \!K18\!};
\node[fill=cyan] at (6.5,8.5){\Large \!K9\!};
\node[fill=red] at (6.5,7.5){\Large \!K12\!};
\node[fill=white] at (6.5,6.5){\Large \!K1\!};
\node[fill=white] at (6.5,5.5){\Large \!K14\!};
\node[fill=gold] at (6.5,4.5){\Large \!K4\!};
\node[fill=cyan] at (6.5,3.5){\Large \!K11\!};
\node[fill=cyan] at (6.5,2.5){\Large \!K6\!};
\node[fill=gold] at (6.5,1.5){\Large \!K17\!};
\node[fill=white] at (6.5,0.5){\Large \!K3\!};
\node[fill=gold] at (7.5,17.5){\Large \!I5\!};
\node[fill=red] at (7.5,16.5){\Large \!I8\!};
\node[fill=red] at (7.5,15.5){\Large \!I15\!};
\node[fill=gold] at (7.5,14.5){\Large \!I16\!};
\node[fill=red] at (7.5,13.5){\Large \!I13\!};
\node[fill=gold] at (7.5,12.5){\Large \!I2\!};
\node[fill=cyan] at (7.5,11.5){\Large \!I7\!};
\node[fill=white] at (7.5,10.5){\Large \!I10\!};
\node[fill=cyan] at (7.5,9.5){\Large \!I18\!};
\node[fill=red] at (7.5,8.5){\Large \!I9\!};
\node[fill=white] at (7.5,7.5){\Large \!I12\!};
\node[fill=cyan] at (7.5,6.5){\Large \!I1\!};
\node[fill=cyan] at (7.5,5.5){\Large \!I14\!};
\node[fill=white] at (7.5,4.5){\Large \!I4\!};
\node[fill=gold] at (7.5,3.5){\Large \!I11\!};
\node[fill=white] at (7.5,2.5){\Large \!I6\!};
\node[fill=cyan] at (7.5,1.5){\Large \!I17\!};
\node[fill=gold] at (7.5,0.5){\Large \!I3\!};
\node[fill=red] at (8.5,17.5){\Large \!C5\!};
\node[fill=gold] at (8.5,16.5){\Large \!C8\!};
\node[fill=red] at (8.5,15.5){\Large \!C15\!};
\node[fill=red] at (8.5,14.5){\Large \!C16\!};
\node[fill=gold] at (8.5,13.5){\Large \!C13\!};
\node[fill=gold] at (8.5,12.5){\Large \!C2\!};
\node[fill=cyan] at (8.5,11.5){\Large \!C7\!};
\node[fill=cyan] at (8.5,10.5){\Large \!C10\!};
\node[fill=white] at (8.5,9.5){\Large \!C18\!};
\node[fill=white] at (8.5,8.5){\Large \!C9\!};
\node[fill=cyan] at (8.5,7.5){\Large \!C12\!};
\node[fill=red] at (8.5,6.5){\Large \!C1\!};
\node[fill=gold] at (8.5,5.5){\Large \!C14\!};
\node[fill=cyan] at (8.5,4.5){\Large \!C4\!};
\node[fill=white] at (8.5,3.5){\Large \!C11\!};
\node[fill=gold] at (8.5,2.5){\Large \!C6\!};
\node[fill=white] at (8.5,1.5){\Large \!C17\!};
\node[fill=cyan] at (8.5,0.5){\Large \!C3\!};
\node[fill=white] at (9.5,17.5){\Large \!J5\!};
\node[fill=red] at (9.5,16.5){\Large \!J8\!};
\node[fill=cyan] at (9.5,15.5){\Large \!J15\!};
\node[fill=red] at (9.5,14.5){\Large \!J16\!};
\node[fill=cyan] at (9.5,13.5){\Large \!J13\!};
\node[fill=white] at (9.5,12.5){\Large \!J2\!};
\node[fill=cyan] at (9.5,11.5){\Large \!J7\!};
\node[fill=gold] at (9.5,10.5){\Large \!J10\!};
\node[fill=white] at (9.5,9.5){\Large \!J18\!};
\node[fill=cyan] at (9.5,8.5){\Large \!J9\!};
\node[fill=white] at (9.5,7.5){\Large \!J12\!};
\node[fill=white] at (9.5,6.5){\Large \!J1\!};
\node[fill=red] at (9.5,5.5){\Large \!J14\!};
\node[fill=gold] at (9.5,4.5){\Large \!J4\!};
\node[fill=gold] at (9.5,3.5){\Large \!J11\!};
\node[fill=gold] at (9.5,2.5){\Large \!J6\!};
\node[fill=red] at (9.5,1.5){\Large \!J17\!};
\node[fill=red] at (9.5,0.5){\Large \!J3\!};
\node[fill=cyan] at (10.5,17.5){\Large \!H5\!};
\node[fill=white] at (10.5,16.5){\Large \!H8\!};
\node[fill=red] at (10.5,15.5){\Large \!H15\!};
\node[fill=cyan] at (10.5,14.5){\Large \!H16\!};
\node[fill=white] at (10.5,13.5){\Large \!H13\!};
\node[fill=red] at (10.5,12.5){\Large \!H2\!};
\node[fill=gold] at (10.5,11.5){\Large \!H7\!};
\node[fill=white] at (10.5,10.5){\Large \!H10\!};
\node[fill=cyan] at (10.5,9.5){\Large \!H18\!};
\node[fill=white] at (10.5,8.5){\Large \!H9\!};
\node[fill=cyan] at (10.5,7.5){\Large \!H12\!};
\node[fill=white] at (10.5,6.5){\Large \!H1\!};
\node[fill=gold] at (10.5,5.5){\Large \!H14\!};
\node[fill=red] at (10.5,4.5){\Large \!H4\!};
\node[fill=gold] at (10.5,3.5){\Large \!H11\!};
\node[fill=red] at (10.5,2.5){\Large \!H6\!};
\node[fill=gold] at (10.5,1.5){\Large \!H17\!};
\node[fill=red] at (10.5,0.5){\Large \!H3\!};
\node[fill=red] at (11.5,17.5){\Large \!P5\!};
\node[fill=cyan] at (11.5,16.5){\Large \!P8\!};
\node[fill=white] at (11.5,15.5){\Large \!P15\!};
\node[fill=white] at (11.5,14.5){\Large \!P16\!};
\node[fill=red] at (11.5,13.5){\Large \!P13\!};
\node[fill=cyan] at (11.5,12.5){\Large \!P2\!};
\node[fill=white] at (11.5,11.5){\Large \!P7\!};
\node[fill=cyan] at (11.5,10.5){\Large \!P10\!};
\node[fill=gold] at (11.5,9.5){\Large \!P18\!};
\node[fill=white] at (11.5,8.5){\Large \!P9\!};
\node[fill=white] at (11.5,7.5){\Large \!P12\!};
\node[fill=cyan] at (11.5,6.5){\Large \!P1\!};
\node[fill=gold] at (11.5,5.5){\Large \!P14\!};
\node[fill=gold] at (11.5,4.5){\Large \!P4\!};
\node[fill=red] at (11.5,3.5){\Large \!P11\!};
\node[fill=red] at (11.5,2.5){\Large \!P6\!};
\node[fill=red] at (11.5,1.5){\Large \!P17\!};
\node[fill=gold] at (11.5,0.5){\Large \!P3\!};
\node[fill=gold] at (12.5,17.5){\Large \!B5\!};
\node[fill=white] at (12.5,16.5){\Large \!B8\!};
\node[fill=cyan] at (12.5,15.5){\Large \!B15\!};
\node[fill=cyan] at (12.5,14.5){\Large \!B16\!};
\node[fill=red] at (12.5,13.5){\Large \!B13\!};
\node[fill=white] at (12.5,12.5){\Large \!B2\!};
\node[fill=white] at (12.5,11.5){\Large \!B7\!};
\node[fill=cyan] at (12.5,10.5){\Large \!B10\!};
\node[fill=red] at (12.5,9.5){\Large \!B18\!};
\node[fill=gold] at (12.5,8.5){\Large \!B9\!};
\node[fill=red] at (12.5,7.5){\Large \!B12\!};
\node[fill=red] at (12.5,6.5){\Large \!B1\!};
\node[fill=cyan] at (12.5,5.5){\Large \!B14\!};
\node[fill=white] at (12.5,4.5){\Large \!B4\!};
\node[fill=white] at (12.5,3.5){\Large \!B11\!};
\node[fill=gold] at (12.5,2.5){\Large \!B6\!};
\node[fill=gold] at (12.5,1.5){\Large \!B17\!};
\node[fill=red] at (12.5,0.5){\Large \!B3\!};
\node[fill=white] at (13.5,17.5){\Large \!O5\!};
\node[fill=cyan] at (13.5,16.5){\Large \!O8\!};
\node[fill=gold] at (13.5,15.5){\Large \!O15\!};
\node[fill=white] at (13.5,14.5){\Large \!O16\!};
\node[fill=cyan] at (13.5,13.5){\Large \!O13\!};
\node[fill=red] at (13.5,12.5){\Large \!O2\!};
\node[fill=red] at (13.5,11.5){\Large \!O7\!};
\node[fill=white] at (13.5,10.5){\Large \!O10\!};
\node[fill=cyan] at (13.5,9.5){\Large \!O18\!};
\node[fill=red] at (13.5,8.5){\Large \!O9\!};
\node[fill=gold] at (13.5,7.5){\Large \!O12\!};
\node[fill=red] at (13.5,6.5){\Large \!O1\!};
\node[fill=white] at (13.5,5.5){\Large \!O14\!};
\node[fill=cyan] at (13.5,4.5){\Large \!O4\!};
\node[fill=white] at (13.5,3.5){\Large \!O11\!};
\node[fill=gold] at (13.5,2.5){\Large \!O6\!};
\node[fill=red] at (13.5,1.5){\Large \!O17\!};
\node[fill=gold] at (13.5,0.5){\Large \!O3\!};
\node[fill=cyan] at (14.5,17.5){\Large \!G5\!};
\node[fill=gold] at (14.5,16.5){\Large \!G8\!};
\node[fill=white] at (14.5,15.5){\Large \!G15\!};
\node[fill=red] at (14.5,14.5){\Large \!G16\!};
\node[fill=white] at (14.5,13.5){\Large \!G13\!};
\node[fill=cyan] at (14.5,12.5){\Large \!G2\!};
\node[fill=cyan] at (14.5,11.5){\Large \!G7\!};
\node[fill=red] at (14.5,10.5){\Large \!G10\!};
\node[fill=white] at (14.5,9.5){\Large \!G18\!};
\node[fill=red] at (14.5,8.5){\Large \!G9\!};
\node[fill=red] at (14.5,7.5){\Large \!G12\!};
\node[fill=gold] at (14.5,6.5){\Large \!G1\!};
\node[fill=white] at (14.5,5.5){\Large \!G14\!};
\node[fill=white] at (14.5,4.5){\Large \!G4\!};
\node[fill=cyan] at (14.5,3.5){\Large \!G11\!};
\node[fill=red] at (14.5,2.5){\Large \!G6\!};
\node[fill=gold] at (14.5,1.5){\Large \!G17\!};
\node[fill=gold] at (14.5,0.5){\Large \!G3\!};
\node[fill=red] at (15.5,17.5){\Large \!M5\!};
\node[fill=cyan] at (15.5,16.5){\Large \!M8\!};
\node[fill=white] at (15.5,15.5){\Large \!M15\!};
\node[fill=cyan] at (15.5,14.5){\Large \!M16\!};
\node[fill=gold] at (15.5,13.5){\Large \!M13\!};
\node[fill=white] at (15.5,12.5){\Large \!M2\!};
\node[fill=cyan] at (15.5,11.5){\Large \!M7\!};
\node[fill=white] at (15.5,10.5){\Large \!M10\!};
\node[fill=red] at (15.5,9.5){\Large \!M18\!};
\node[fill=red] at (15.5,8.5){\Large \!M9\!};
\node[fill=gold] at (15.5,7.5){\Large \!M12\!};
\node[fill=gold] at (15.5,6.5){\Large \!M1\!};
\node[fill=red] at (15.5,5.5){\Large \!M14\!};
\node[fill=red] at (15.5,4.5){\Large \!M4\!};
\node[fill=gold] at (15.5,3.5){\Large \!M11\!};
\node[fill=cyan] at (15.5,2.5){\Large \!M6\!};
\node[fill=white] at (15.5,1.5){\Large \!M17\!};
\node[fill=white] at (15.5,0.5){\Large \!M3\!};
\node[fill=cyan] at (16.5,17.5){\Large \!E5\!};
\node[fill=white] at (16.5,16.5){\Large \!E8\!};
\node[fill=red] at (16.5,15.5){\Large \!E15\!};
\node[fill=white] at (16.5,14.5){\Large \!E16\!};
\node[fill=cyan] at (16.5,13.5){\Large \!E13\!};
\node[fill=gold] at (16.5,12.5){\Large \!E2\!};
\node[fill=red] at (16.5,11.5){\Large \!E7\!};
\node[fill=cyan] at (16.5,10.5){\Large \!E10\!};
\node[fill=white] at (16.5,9.5){\Large \!E18\!};
\node[fill=gold] at (16.5,8.5){\Large \!E9\!};
\node[fill=red] at (16.5,7.5){\Large \!E12\!};
\node[fill=gold] at (16.5,6.5){\Large \!E1\!};
\node[fill=red] at (16.5,5.5){\Large \!E14\!};
\node[fill=gold] at (16.5,4.5){\Large \!E4\!};
\node[fill=red] at (16.5,3.5){\Large \!E11\!};
\node[fill=white] at (16.5,2.5){\Large \!E6\!};
\node[fill=cyan] at (16.5,1.5){\Large \!E17\!};
\node[fill=white] at (16.5,0.5){\Large \!E3\!};
\node[fill=white] at (17.5,17.5){\Large \!R5\!};
\node[fill=red] at (17.5,16.5){\Large \!R8\!};
\node[fill=cyan] at (17.5,15.5){\Large \!R15\!};
\node[fill=gold] at (17.5,14.5){\Large \!R16\!};
\node[fill=white] at (17.5,13.5){\Large \!R13\!};
\node[fill=cyan] at (17.5,12.5){\Large \!R2\!};
\node[fill=white] at (17.5,11.5){\Large \!R7\!};
\node[fill=red] at (17.5,10.5){\Large \!R10\!};
\node[fill=cyan] at (17.5,9.5){\Large \!R18\!};
\node[fill=gold] at (17.5,8.5){\Large \!R9\!};
\node[fill=gold] at (17.5,7.5){\Large \!R12\!};
\node[fill=red] at (17.5,6.5){\Large \!R1\!};
\node[fill=gold] at (17.5,5.5){\Large \!R14\!};
\node[fill=red] at (17.5,4.5){\Large \!R4\!};
\node[fill=red] at (17.5,3.5){\Large \!R11\!};
\node[fill=white] at (17.5,2.5){\Large \!R6\!};
\node[fill=white] at (17.5,1.5){\Large \!R17\!};
\node[fill=cyan] at (17.5,0.5){\Large \!R3\!};
\draw[thick, step=1cm, color=black] (0,0) grid (18, 18);
\draw[line width=3pt, step=3cm, color=black] (0,0) grid (18, 18);
\draw[line width=3pt, color=black] (0,0) -- (0,18) -- (18,18) -- (18,0) --cycle;
\end{tikzpicture}
}
\caption{Solution of 18 by 18 grid found by Steinbach et al.\ and an isomorphic solution with shift pattern.}
\label{figure: steinbach_sol}
\end{figure}
% Perhaps this image works better in the title page, but for some reason I can't make it appear there.

% Discuss context: how SAT solvers are useful to solving combinatorial problems
Consider coloring an $m$ by $n$ grid with $k$ colors, such that there is no monochromatic rectangle. When this is possible, we say that the $m$ by $n$ grid is $k$-colorable. Many results have been derived by pure combinatorial approach: for example, a generalization of Van der Waerden's Theorem can give an upper bound, and Fenner et al.~\cite{fenner2010rectangle} showed that
% I think this was discovered earlier, though these people summarized it in their paper. Which one should I cite?
for each prime power $k$, a $k^2 + k$ by $k$ grid is $k$-colorable but adding a row makes it not $k$-colorable. However, these results are unable to decide many grid sizes: whether an 18 by 18 grid is 4-colorable is an example. This grid had been the last missing piece of the question of 4-colorability, and a challenge prize was raised to close the gap in 2009 \cite{Hayes1717}. This innocuous problem turned out to be an extremely complex computational challenge.
Three years later, Steinbach et al.~\cite{steinbach_posthoff_2012} found a valid 4-coloring of that grid by encoding the problem into propositional logic and applying SAT-solving techniques. 
SAT techniques have also been effective for various other mathematical problems, including Erd\H{o}s discrepancy problem~\cite{konev:2014}, the Pythagorean Triples problem~\cite{Ptn}, and Keller's conjecture~\cite{Keller}. 
The solution by Steinbach et al. is shown in Figure~\ref{figure: steinbach_sol}. 
Notice that the color assignments are highly symmetric: assignments of red is obtained by rotating the assignments of white around the center by 90 degrees, blue by 180 degrees, and so on. 
By now, the $k$-colorability has been decided for $k \in \{2, 3, 4\}$ for all grids.

Therefore, it is natural to ask, what about 5 colors? Applying the aforementioned theorem of Fenner et al.~\cite{fenner2010rectangle}, the 25 by 30 grid is 5-colorable, but for other grids such as 26 by 26 the problem remains open. Like many combinatorial search problems, the rectangle-free grid coloring problem is characterized by enormous search space and rich symmetries. Symmetry breaking is a common technique to trim down the search space while preserving satisfiability. While breaking symmetries between different solutions is definitely helpful, breaking the so-called ``internal symmetries" that is within a specific solution has also been proved to be effective~\cite{heule2010symmetry}. 
Enforcing observed patterns is also known as ``streamlining''~\cite{streamlining} and ``resolution tunnels''~\cite{tunnels} and has
been effective to improve lower bounds of various combinatorial problems including Van der Waerden numbers~\cite{tunnels,triples}, Latin squares~\cite{streamlining}, and graceful graphs~\cite{heule2010symmetry}. 
% TODO: insert some examples here maybe?

However, the rotation internal symmetry that Steinbach et al.~\cite{steinbach_posthoff_2012} applied cannot translate to 5 colors. In finding a 4-coloring of the 18 by 18 grid, Steinbach et al.~\cite{steinbach_posthoff_2012} generated a ``cyclic reusable assignment" for one color, and rotate the solution by 90, 180, and 270 degrees to assign to the remaining three. While rotation by 90 degrees is a natural symmetry for 4 colors, this does not apply to the number of colors $k$ that are not multiples of 4.
\par
Thus, to find a 5-coloring of 26 by 26, or rather, to find a valid coloring for any number of colors $k$ in general, an internal symmetry that is applicable to all $k$ is very desirable. With these missions in mind, we found a novel internal symmetry that is unrestricted by the number of colors $k$: As Figure~\ref{figure: steinbach_sol} shows, this pattern is compatible with the known Steinbach's solution of 18 by 18 grid, and it leads to new solutions of 18 by 18 grid that cannot be obtained from the existing one. Search time can be further reduced for this pattern: further analysis poses number of occurrences constraints to colors; a correct color distribution solution to these constraints can reduce the search time for $G_{24, 24}$ and $G_{25, 25}$ to a matter of minutes. Equipped with this new pattern, we also made attempts to solve the 26 by 26 grid; many attempts came down to only 2 or 3 unsatisfied clauses, but none of them succeeded.
\par
The remainder of this article is structured as follows: Section~\ref{section:prelim} discusses preliminaries; in Section~\ref{section:class}, as a starting point, we classify all 2-colorings of 4 by 4 grids and all 3-colorings of 10 by 10 grids; Section~\ref{section:pattern} describes some generalizable patterns we discovered and their analysis; Section~\ref{section:discussions} discusses our attempts on the 26 by 26 grid problems, as well as conjectures collected.

\section{Preliminaries}\label{section:prelim}

\paragraph*{Definitions and terminologies}
Let $G_{m, n}$ denote an $m$ by $n$ grid, and label the cells with ordered pairs in the set $[m] \times [n]$. For a fixed coloring of $G_{m, n}$, a \textit{monochromatic rectangle} is four distinct points $(a, b), (a, d), (c, b), (c, d) \in [m] \times [n]$ that are assigned the same color. A grid $G_{m, n}$ is \textit{$k$-colorable} if, for some coloring $\chi: [m] \times [n] \rightarrow [k]$, there is no monochromatic rectangle, and such $\chi$ is called a $k$-coloring.
\par
Throughout this article, we shall use the word ``pattern" and ``internal symmetry" interchangeably.
We use the term ``shift pattern", when not specified, to refer to a pattern with all subgrids shifted in the same direction.

\paragraph*{Preliminary encoding}
Let $G_{m, n}$ be a grid and $r$ be a positive integer. To encode the problem into propositional formulas, let $a_{ij}^r$ be a variable meaning ``the cell on the $i^{\text{th}}$ row and the $j^{\text{th}}$ column is assigned color $r$''.
\par
First, in order to assign one and only one color to each cell, we need to assert the following:
\begin{equation}
    \bigwedge_{i \in [m]} \bigwedge_{j \in [n]}\left(\AtLeastOne\left( \{a_{ij}^c: c \in [k]\}\right) \wedge \AtMostOne \left( \{a_{ij}^c: c \in [k]\}\right)\right)
\end{equation}
where $\AtLeastOne(S) := \bigvee_{x\in S} x$ and $\AtMostOne(S) := \bigwedge_{\{x, y\} \in {S \choose 2}} (\bar{x} \vee \bar{y})$ for a set $S$.
\par
To encode the rectangle-free constraint, we need to assert
\begin{equation}
    \bigwedge_{m_1=1}^{m-j}\bigwedge_{n_1=1}^{n-i}\bigwedge_{c \in [k]} ( \bar{a}^{c}_{i,j} \vee \bar{a}^{c}_{i+n_1,j} \vee \bar{a}^{c}_{i,j+m_1} \vee \bar{a}^{c}_{i+n_1,j+m_1}))
\end{equation}
And taking conjunction of (1) and (2) gives the propositional encoding of ``there is a rectangle-free $k$-coloring of $G_{m, n}$".

\paragraph*{Experimental setup}
Throughout the article, we used the conflict-driven clause learning (CDCL) SAT solver \texttt{CaDiCaL} and 
the local search SAT solver \texttt{YalSAT} developed by Biere~\cite{Biere2019CADICALAT}.
\texttt{CaDiCaL} was used for smaller grid instances of size up to 18 by 18. For larger grid instances with patterns that are likely to be satisfiable, we used \texttt{PalSAT}, a parallel version of \texttt{YalSAT}. In our experience, local search outperforms CDCL on the satisfiable instances, similar to other hard-combinatorial satisfiable problems~\cite{Ptn,matrix}. 
% citation and justification here
To solve integer constraints 
% is there a name for this problem? it's not linear, so it's not integer programming...
for color distribution solutions discussed in Section~\ref{section:necessary}, we used \texttt{Z3} Theorem Prover developed by Microsoft \cite{de2008z3}. %citation here
\par
For smaller grid instances of size up to 18 by 18, we ran \texttt{CaDiCaL} on the general purpose linux hosts at Carnegie Mellon University, as these instances tend to be not resource-intensive and have reasonably short runtimes.
% I intended to run Z3 but haven't figured out how to do that on ls5
\par
For 5-colorings of larger grid instances, the SAT solvers were ran on the Lonestar 5 cluster of Texas at Texas Advance Computing Center, which has Xeon E5-2690 processors with 24 cores per node. We ran \texttt{PalSAT} on one node with option ``-t 12 \$RANDOM\$RANDOM", which specifies 12 worker threads to be created and a random seed.

\section{Classifying colorings of smaller grid examples}\label{section:class}
We consider classifying relatively simpler grid coloring cases to be a good starting point that enables us to gain insight into how many ``essentially different" solutions there are without putting in too much computational resources. The motivation is that, in this problem there are natural symmetries between solutions, and it suffices to consider the representative of each equivalence class. We will use 2-colorings of $G_{4, 4}$ and 3-colorings of $G_{10, 10}$ as examples, as these are maximal squares that are 2 and 3-colorable.

\par
For square grids $G_{n, n}$ and number of colors $k$, there are three categories of natural symmetries between different colorings (*):
\begin{enumerate}
    \item permutation of colors
    \item permutation of rows or columns
    \item transposition, i.e. a flip along the diagonal
\end{enumerate}
One can check that the each of the operations indeed maps one valid coloring to another valid coloring.
\begin{definition}\label{def: iso}
Define an equivalence relation $E$ on $[k]^{[n] \times [n]}$: two colorings are equivalent (or isomorphic) if one can be obtained from the other by applying a sequence of operations from (*).
\end{definition}
 It is routine to check that $E$ is indeed an equivalence relation. Let $Grid(n, n, k)$ be the set of all valid $k$-colorings of $G_{n, n}$. We are interested in the number of equivalence classes of $Grid(4, 4, 2)$ and $Grid(10, 10, 3)$ under $E$.
\par
% Is it necessary to include failed attempt here?
To count the equivalence classes, one of the natural first thoughts would be to define an appropriate group, such that each orbit corresponds to an equivalence class, and apply Burnside's Lemma. However, there is a drawback to this approach: the identity element of the group acting on $Grid(10, 10, 3)$ fixes all grid colorings, but the exact value of $|Grid(10, 10, 3)|$ is hard to compute.

\par
To avoid the need to enumerate $Grid(10, 10, 3)$, we need to exploit its symmetries more upfront. One idea is to use \texttt{Shatter}
the symmetry breaking tool \cite{shatter}
to add symmetry-breaking clauses to the original CNF formula. This does not alter the number of equivalence classes since symmetric assignments of literals (w.r.t. clause literal graphs) correspond to isomorphic colorings.
\par
On the other hand, we need to efficiently identify isomorphic graph colorings. Our approach is to convert grid colorings to graph colorings, and use \texttt{Bliss}~\cite{JunttilaKaski:ALENEX2007}
graph isomorphism tool to assign to each graph a canonical labeling.
In particular, we identify each member $Grid(n, n, k)$ with a graph on $n^2$ vertices, where each vertex is identified with a cell in the grid and assigned the same label and same color as the cell. Two distinct vertices $(a, b), (c, d)$ are joined by an edge if and only if $a = c$ or $b = d$. In graph $G$, we shall denote this as $(a, b) \sim_G (c, d)$. There are no self-loops on this graph.
\par
In order to reduce the classification of grids to that of graphs, we need to prove that two $k$-colorings of $G_{n, n}$ are isomorphic if and only if their graph representations are isomorphic.

\begin{proposition}
    Let $\chi_1, \chi_2$ be two $k$-colorings of the grid $G_{n, n}$ where $n>1$.
	Let $H_1$ and $H_2$ be the corresponding graph of $\chi_1$ and $\chi_2$, respectively. If $H_1$ and $H_2$ are isomorphic graphs, then $\chi_1$ and $\chi_2$ are isomorphic grid colorings as in definition \ref{def: iso}.
\end{proposition}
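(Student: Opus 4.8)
The plan is to reduce the statement to a structural understanding of the underlying (uncolored) graph, which for both $H_1$ and $H_2$ is one and the same $n \times n$ rook graph $R_n$ on vertex set $[n] \times [n]$, with $(a,b) \sim (c,d)$ iff $a = c$ or $b = d$. Since $H_1$ and $H_2$ share this edge set, a graph isomorphism $\phi : H_1 \to H_2$ is in particular an automorphism of $R_n$ that respects the vertex-coloring up to a relabeling of the colors. The goal is then to show that every such $\phi$ decomposes into operations of the three types in $(*)$.

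First I would characterize the cliques of $R_n$. The key lemma is that every clique lies inside a single row or a single column: if two cells of a clique share a row $r$, then any further clique vertex is adjacent to two cells of $r$ in distinct columns, so it cannot lie off $r$ (it would have to occupy two distinct columns at once), forcing the whole clique into $r$; the shared-column case is symmetric, and cliques of size at most $2$ are immediate. Consequently, for $n > 1$ the maximal cliques of $R_n$ are exactly the $n$ rows and the $n$ columns, each of size $n$ (any full line is maximal, and a maximal clique sitting inside a line must be the whole line).

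Next I would exploit the incidence pattern of these $2n$ maximal cliques: any two rows, and any two columns, are vertex-disjoint, whereas row $i$ and column $j$ meet in the single vertex $(i,j)$. A graph isomorphism $\phi$ sends maximal cliques to maximal cliques and preserves vertex-disjointness; since the rows form a maximal family of pairwise-disjoint lines, as do the columns, while every row meets every column, $\phi$ must send the row family either onto the row family or onto the column family, and correspondingly for columns. Writing each vertex as the unique intersection of its row and its column then forces $\phi(i,j) = (\pi(i), \tau(j))$ for permutations $\pi, \tau \in S_n$ in the first case, and $\phi(i,j) = (\pi(j), \tau(i))$ — a transposition composed with such permutations — in the second. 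These are precisely the operations of types~2 and~3 in $(*)$.

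Finally, since $\phi$ respects colors, it induces a bijection $\sigma$ on $[k]$ with $\chi_2(\phi(i,j)) = \sigma(\chi_1(i,j))$, and applying $\sigma$ is an operation of type~1. Thus $\chi_2$ is obtained from $\chi_1$ by the row/column permutations and optional transposition encoded by $\phi$, followed by the color permutation $\sigma$, i.e.\ by a sequence of operations from $(*)$, so $\chi_1$ and $\chi_2$ are isomorphic. I expect the main obstacle to be the second and third steps: proving that no ``exotic'' automorphism of $R_n$ exists beyond those generated by $(*)$. This is exactly where the exceptional case $n = 4$ (where $R_4$ shares its strongly-regular parameters with the Shrikhande graph) could raise doubts, but the maximal-clique characterization sidesteps that worry entirely, since it pins down $\mathrm{Aut}(R_n)$ directly for every $n > 1$ without invoking any strong-regularity classification.
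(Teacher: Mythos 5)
Your proof is correct, but it takes a genuinely different route from the paper's. The paper argues pointwise: it fixes two cells $x_{i,j}$, $x_{i,q}$ in one row, notes their images must share a row or a column (reducing the column case to the row case by composing with a flip), and then shows by direct adjacency-plus-injectivity arguments that every row maps into a row and every column into a column, after which the coordinate maps $f_1, f_2$ are assembled and shown to be permutations. You instead argue globally through the clique structure: every clique of the rook graph lies in a single line, so for $n>1$ the maximal cliques are exactly the $2n$ lines; an isomorphism permutes maximal cliques and preserves their incidence pattern (rows pairwise disjoint, columns pairwise disjoint, each row meeting each column in exactly one vertex), which forces the row family to map onto the row family or onto the column family and pins the map down to $(i,j)\mapsto(\pi(i),\tau(j))$, possibly composed with transposition. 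Both arguments hinge on the same elementary fact --- a vertex adjacent to two cells of a row in distinct columns must lie in that row --- but your packaging buys more: it determines $\mathrm{Aut}$ of the rook graph outright with no case bookkeeping (and, as you note, with no strong-regularity worries, since the maximal-clique sizes already separate the $4\times 4$ rook graph from the Shrikhande graph), and it accommodates isomorphisms that permute color classes rather than preserve colors exactly, which is slightly more general than the strictly color-preserving map the paper's final identity $\chi_1(x_{i,j}) = \chi_2(x_{f_1(i), f_2(j)})$ presumes. The paper's version is more elementary in that it never needs the notion of maximal cliques or their incidence structure, only adjacency and injectivity.
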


\begin{proof}

Denote the vertex sets of $H_1$ and $H_2$ as $V_1 = \{x_{i, j}: 1 \leq i, j \leq n\}, V_2 = \{y_{i, j}: 1 \leq i, j \leq n\}$ respectively. Let $f: V_1 \rightarrow V_2$ be a graph isomorphism. To show that $\chi_1$ and $\chi_2$ are isomorphic colorings, it suffices to argue that rows are mapped to rows and columns are mapped to columns. Notice that this is only the first possibility, the second possibility is that the transposition is in effect and rows are mapped to columns and columns are mapped to rows.

 To check whether the transposition is in effect we pick $x_{i,j},\;x_{i,q}$  two vertices that are in the same row.  By isomorphism $f(x_{i,j})\sim_{H_2} f(x_{i,q})$. This means either:

\begin{enumerate}
	\item $f(x_{i,j})$, $ f(x_{i,q})$ are in the same row \label{case:id}
	\item $f(x_{i,j})$, $ f(x_{i,q})$ are in the same column \label{case:flip}
\end{enumerate}

Intuitively, Case~\ref{case:id} corresponds to the scenario where only row and column permutations are involved in the grid isomorphism, and Case~\ref{case:flip} involves an additional flip along the diagonal. We shall only prove Case 1, since Case~\ref{case:flip} follows from case 1 by applying the additional flip. We will use $x_{i,j},\;x_{i,q}$, and the indices $i,j,q$ throughout the remainder of the proof as fixed.

In Case~\ref{case:id}, to show that the isomorphism only involves row and column permutations, it suffices to argue that for any $1 \leq a, b, d, e \leq n$,
\begin{enumerate}[label=(\roman*)]
    \item $f(x_{a, d})$ and $f(x_{a, e})$ are in the same row \label{case: row}
    \item $f(x_{a, d})$ and $f(x_{b, d})$ are in the same column \label{case: col}
\end{enumerate}

We will prove Case~\ref{case: row}, and Case~\ref{case: col} follows similarly.

\textbf{Case~\ref{case: row}. }Let $f(x_{i,j})$, $ f(x_{i,q})$ be both in row $\alpha$. For any $1 \leq t \leq n$, $f(x_{i,t})$ must also be in row $\alpha$, because $f(x_{i,j}) \sim_{H_2} f(x_{i,t}) \sim_{H_2} f(x_{i,q})$ and $f(x_{i,t})$ cannot share a column with both $f(x_{i,j})$ and $f(x_{i,q})$, since they are on different columns. 
We have argued that row $i$ gets mapped to a row, but what about any other row. Suppose for some $a \neq i $ and some $d, e$ that $f(x_{a,d}), f(x_{a,e})$ are in column $\beta$ (rather than a row). Then every $f(x_{a, t})$ must be in column $\beta$ because it cannot share a row with both $f(x_{a, d})$ and $f(x_{a, e})$. However this column will meet row $\alpha$ at node $y_{\alpha, \beta}$, which shares an edge between $f(x_{a,t})$ and $f(x_{i,t})$ for every $t$. However this is impossible as $i \neq a$. Therefore, for a fixed $1 \leq a \leq n$, all nodes of the form $f(x_{a, d})$ and $f(x_{a, e})$ are in the same row.

\textbf{Case~\ref{case: col}. }For columns, let $f(x_{i,j})$ be in column $\gamma$. By isomorphism for all $t \neq i$, $f(x_{t,j})\sim_{H_2}f(x_{i,j})$.
If $f(x_{t,j})$ is in the same row as $f(x_{i,j})$, then  $f(x_{t,j})$ shares a row with some $f(x_{i,q})$ for some $q \neq j$, which is not allowed. So, $f(x_{t, j})$ is in column $\gamma$ for all $1 \leq t \leq n$.

Now for columns that are not $j$. Suppose for some $d \neq j$ and some $a, b$ that $f(x_{a, d}), f(x_{b, d})$ are both in row $\delta$. Then for all t, $f(x_{t, d})$ are in that row. However node $y_{\delta, \gamma}$ now is adjacent to $f(x_{t, j})$  and $f(x_{t ,d})$ for every $t$, which is again impossible since $d \neq j$.

Since being on the same row and being on the same column are preserved under $f$, the components $f_1, f_2: [n] \rightarrow [n]$ defined as $f(x_{i,j})= x_{f_1(i), f_2(j)}$ are well-defined functions. Note that $f_1$ and $f_2$ must be surjective because $f$ is surjective. Since $[n]$ is finite, $f_1, f_2$ are permutations on rows and columns respectively.

Therefore, there is a color isomorphism between $\chi_1$ and $\chi_2$, given by $\chi_1(x_{i,j}) = \chi_2(x_{f_1(i), f_2(j)})$.
\end{proof}

\begin{proposition}
	Consider a colored $G_{n, n}$ graph $G$, then any of the following induces a graph isomorphism:
	\begin{itemize}
		\item a row transposition
		\item a column transposition
		\item a flip along the diagonal
	\end{itemize}
\end{proposition}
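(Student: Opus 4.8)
The plan is to treat each of the three operations separately by exhibiting an explicit bijection on the vertex set $\{x_{i,j} : 1 \le i,j \le n\}$ and checking that it both preserves the adjacency relation $\sim_G$ and respects colors. In each case the operation, applied to the coloring $\chi$ underlying $G$, produces a new coloring $\chi'$; I will show that the natural relabeling of vertices is a color-preserving isomorphism from $G$ onto the graph $G'$ of $\chi'$. Because any permutation of the rows (or of the columns) is a composition of transpositions and a composition of graph isomorphisms is again a graph isomorphism, it suffices to handle a single row transposition, a single column transposition, and the diagonal flip; this then covers the full strength of operations~2 and~3 from~(*).

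First I would handle a row transposition $\tau = (p\;q)$ and define $\phi(x_{i,j}) = x_{\tau(i),j}$. Since $\tau$ is an involution on $[n]$, the map $\phi$ is a bijection. To see that adjacency is preserved, recall that $x_{i,j} \sim_G x_{i',j'}$ holds exactly when $i = i'$ or $j = j'$ (with the two vertices distinct). As $\tau$ is a bijection, $i = i'$ if and only if $\tau(i) = \tau(i')$, while the column coordinates are untouched, so $x_{i,j} \sim_G x_{i',j'}$ if and only if $\phi(x_{i,j}) \sim_G \phi(x_{i',j'})$. Finally, defining $\chi'$ by $\chi'(\tau(i),j) = \chi(i,j)$ makes $\phi$ carry the color at each vertex to the color at its image, so $\phi$ is a color-preserving isomorphism $G \to G'$. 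The column transposition $\psi(x_{i,j}) = x_{i,\tau(j)}$ is entirely symmetric.

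The diagonal flip is handled by $\rho(x_{i,j}) = x_{j,i}$, again an involution and hence a bijection. The only point needing care --- and the one mildly subtle step --- is that $\rho$ interchanges the roles of rows and columns, so I must check that the defining disjunction of $\sim_G$ is invariant. But ``$i = i'$ or $j = j'$'' becomes ``$j = j'$ or $i = i'$'' after the coordinate swap, which is the same condition, so adjacency is preserved; the colors then match once $\chi'$ is taken to be the transposed coloring $\chi'(j,i) = \chi(i,j)$. I do not expect any genuine obstacle: all three verifications reduce to the bijectivity of the index maps together with the symmetry of the adjacency relation in its two coordinates, and the real content of the grid-to-graph correspondence lies in the converse direction already established in Proposition~1.
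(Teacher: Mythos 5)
Your proposal is correct and follows essentially the same route as the paper: a direct verification that each of the three relabelings preserves the adjacency relation (same row or same column), with the diagonal flip handled by the symmetry of that relation in its two coordinates. Your write-up is somewhat more explicit than the paper's (explicit bijections, the ``if and only if'' via bijectivity of the index maps, and the color bookkeeping), but the underlying argument is the same.
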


\begin{proof}
	A row transposition $f$ between rows $i_1$ and $i_2$ preserves the columns, so for any $j$ it holds that $f(x_{i_1,j}) \sim f(x_{i_2,j})$.
	For $i\notin \{i_1, i_2\}$ the row is preserved so $f(x_{i,j}) = x_{i,j} \sim x_{i,j'}= f(x_{i,j'})$.
	The argument for column transposition is symmetric.
	The diagonal flip works because columns and rows work in a symmetric way with regards to adjacency in the graph.
	
\end{proof}

Now with this conversion scheme, we can determine whether two grid colorings are isomorphic by comparing their corresponding canonical representatives output by \texttt{Bliss}~\cite{JunttilaKaski:ALENEX2007}, up to color permutations. This is because we consider color permutations on the original grid to be isomorphisms.
\par
We used $Grid(4, 4, 2)$ as a toy example to check the validity of our approach. In this small example, it can be manually shown that there are only 3 solutions up to isomorphism.
We generated all 840 solutions: new solutions are obtained by adding clauses to forbid existing solutions, and after 840 solutions were collected the CNF formula became unsatisfiable. We classified all the solutions by comparing their corresponding graphs using \texttt{Bliss}~\cite{JunttilaKaski:ALENEX2007} as described above. There are three distinct solutions up to isomorphism, as shown in Figure~\ref{figure:4_4_2_reprs}. 

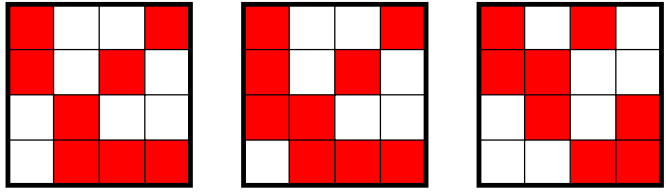
\begin{figure}
    \centering
    \scalebox{0.6}{\begin{tikzpicture}[every node/.style={minimum size=1cm-\pgflinewidth, outer sep=0pt}]
\node[fill=red] at (0.5,3.5) {};
\node[fill=red] at (0.5,2.5) {};
\node[fill=white] at (0.5,1.5) {};
\node[fill=white] at (0.5,0.5) {};
\node[fill=white] at (1.5,3.5) {};
\node[fill=white] at (1.5,2.5) {};
\node[fill=red] at (1.5,1.5) {};
\node[fill=red] at (1.5,0.5) {};
\node[fill=white] at (2.5,3.5) {};
\node[fill=red] at (2.5,2.5) {};
\node[fill=white] at (2.5,1.5) {};
\node[fill=red] at (2.5,0.5) {};
\node[fill=red] at (3.5,3.5) {};
\node[fill=white] at (3.5,2.5) {};
\node[fill=white] at (3.5,1.5) {};
\node[fill=red] at (3.5,0.5) {};
\draw[thick, step=1cm, color=black] (0,0) grid (4, 4);
\draw[line width=3pt, color=black] (0, 0) -- (0, 4) -- (4, 4) -- (4, 0) -- cycle;
\end{tikzpicture}
~~~~~~~
\begin{tikzpicture}[every node/.style={minimum size=1cm-\pgflinewidth, outer sep=0pt}]
\node[fill=red] at (0.5,3.5) {};
\node[fill=red] at (0.5,2.5) {};
\node[fill=red] at (0.5,1.5) {};
\node[fill=white] at (0.5,0.5) {};
\node[fill=white] at (1.5,3.5) {};
\node[fill=white] at (1.5,2.5) {};
\node[fill=red] at (1.5,1.5) {};
\node[fill=red] at (1.5,0.5) {};
\node[fill=white] at (2.5,3.5) {};
\node[fill=red] at (2.5,2.5) {};
\node[fill=white] at (2.5,1.5) {};
\node[fill=red] at (2.5,0.5) {};
\node[fill=red] at (3.5,3.5) {};
\node[fill=white] at (3.5,2.5) {};
\node[fill=white] at (3.5,1.5) {};
\node[fill=red] at (3.5,0.5) {};
\draw[thick, step=1cm, color=black] (0,0) grid (4, 4);
\draw[line width=3pt, color=black] (0, 0) -- (0, 4) -- (4, 4) -- (4, 0) -- cycle;
\end{tikzpicture}
~~~~~~~
\begin{tikzpicture}[every node/.style={minimum size=1cm-\pgflinewidth, outer sep=0pt}]
\node[fill=red] at (0.5,3.5) {};
\node[fill=red] at (0.5,2.5) {};
\node[fill=white] at (0.5,1.5) {};
\node[fill=white] at (0.5,0.5) {};
\node[fill=white] at (1.5,3.5) {};
\node[fill=red] at (1.5,2.5) {};
\node[fill=red] at (1.5,1.5) {};
\node[fill=white] at (1.5,0.5) {};
\node[fill=red] at (2.5,3.5) {};
\node[fill=white] at (2.5,2.5) {};
\node[fill=white] at (2.5,1.5) {};
\node[fill=red] at (2.5,0.5) {};
\node[fill=white] at (3.5,3.5) {};
\node[fill=white] at (3.5,2.5) {};
\node[fill=red] at (3.5,1.5) {};
\node[fill=red] at (3.5,0.5) {};
\draw[thick, step=1cm, color=black] (0,0) grid (4, 4);
\draw[line width=3pt, color=black] (0, 0) -- (0, 4) -- (4, 4) -- (4, 0) -- cycle;
\end{tikzpicture}

}
    \caption{Canonical representative solutions of $Grid(4, 4, 2)$ as output by \texttt{Bliss}}
    \label{figure:4_4_2_reprs}
\end{figure}

In the case of $Grid(10, 10, 3)$, we added symmetry breaking clauses by \texttt{Shatter}, and the CNF formula yielded 35 distinct solutions. Comparison of graph representatives showed that all 35 solutions are isomorphic. That is, there is only one equivalence class in $Grid(10, 10, 3)$; the left grid on Figure \ref{figure:shift_iso} shows the canonical representative output by \texttt{Bliss}~\cite{JunttilaKaski:ALENEX2007}.

\section{Generalizable pattern}
\label{section:pattern}
\subsection{Observing generalizable pattern}

Knowing that all solutions in $Grid(10, 10, 3)$ are isomorphic, it is easier to observe generalizable patterns. This section discusses a pattern involving shifts of rows within subgrids, that is observed in $Grid(10, 10, 3)$ and later larger grids with more colors.
\par
We start by observing shifts in $G_{10, 10}$. Divide $G_{10, 10}$ into 4-subgrids, and within each subgrid, the second row is a copy of the first row except shifted right (or left) by 1; the third row shifted by 2; the last row shifted by 3. All the shifts wrap around on the edge of the subgrid. Figure~\ref{figure:shift_scheme} shows a schematic of this pattern. This is first observed as an isomorphic coloring of the canonical representative of $Grid(10, 10)$; Figure~\ref{figure:shift_iso} shows the representative and the isomorphism side by side. It is noteworthy that this solution also shows a shift pattern on the scale of the entire subgrid, which we shall utilize on larger grids.

\begin{figure}
    \centering
    \scalebox{0.6}{
\begin{tikzpicture}[every node/.style={minimum size=1cm-\pgflinewidth, outer sep=0pt}]
\node[] at (0.5,3.5) {\Large A};
\node[] at (0.5,2.5) {\Large D};
\node[] at (0.5,1.5) {\Large C};
\node[] at (0.5,0.5) {\Large B};
\node[] at (1.5,3.5) {\Large B};
\node[] at (1.5,2.5) {\Large A};
\node[] at (1.5,1.5) {\Large D};
\node[] at (1.5,0.5) {\Large C};
\node[] at (2.5,3.5) {\Large C};
\node[] at (2.5,2.5) {\Large B};
\node[] at (2.5,1.5) {\Large A};
\node[] at (2.5,0.5) {\Large D};
\node[] at (3.5,3.5) {\Large D};
\node[] at (3.5,2.5) {\Large C};
\node[] at (3.5,1.5) {\Large B};
\node[] at (3.5,0.5) {\Large A};
\draw[thick, step=1cm, color=black] (0,0) grid (4, 4);
%\draw[line width=3pt, step=4cm, color=black] (0, 0) grid (4, 4);
\draw[line width=3pt, color=black] (0, 0) -- (0, 4) -- (4, 4) -- (4, 0) -- cycle;
\end{tikzpicture}

}
    \caption{Schematic of a 4-subgrid right shift pattern. Each letter denotes a color.}
    \label{figure:shift_scheme}
\end{figure}
\begin{figure}
    \centering
    \scalebox{0.6}{
\begin{tikzpicture}[every node/.style={minimum size=1cm-\pgflinewidth, outer sep=0pt}]
\node[fill=red] at (0.5,9.5) {\Large A1};
\node[fill=cyan] at (0.5,8.5) {\Large A2};
\node[fill=white] at (0.5,7.5) {\Large A3};
\node[fill=white] at (0.5,6.5) {\Large A4};
\node[fill=cyan] at (0.5,5.5) {\Large A5};
\node[fill=red] at (0.5,4.5) {\Large A6};
\node[fill=white] at (0.5,3.5) {\Large A7};
\node[fill=red] at (0.5,2.5) {\Large A8};
\node[fill=cyan] at (0.5,1.5) {\Large A9};
\node[fill=red] at (0.5,0.5) {\Large A0};
\node[fill=white] at (1.5,9.5) {\Large B1};
\node[fill=red] at (1.5,8.5) {\Large B2};
\node[fill=cyan] at (1.5,7.5) {\Large B3};
\node[fill=white] at (1.5,6.5) {\Large B4};
\node[fill=cyan] at (1.5,5.5) {\Large B5};
\node[fill=cyan] at (1.5,4.5) {\Large B6};
\node[fill=red] at (1.5,3.5) {\Large B7};
\node[fill=red] at (1.5,2.5) {\Large B8};
\node[fill=white] at (1.5,1.5) {\Large B9};
\node[fill=cyan] at (1.5,0.5) {\Large B0};
\node[fill=white] at (2.5,9.5) {\Large C1};
\node[fill=red] at (2.5,8.5) {\Large C2};
\node[fill=red] at (2.5,7.5) {\Large C3};
\node[fill=red] at (2.5,6.5) {\Large C4};
\node[fill=white] at (2.5,5.5) {\Large C5};
\node[fill=cyan] at (2.5,4.5) {\Large C6};
\node[fill=cyan] at (2.5,3.5) {\Large C7};
\node[fill=cyan] at (2.5,2.5) {\Large C8};
\node[fill=red] at (2.5,1.5) {\Large C9};
\node[fill=white] at (2.5,0.5) {\Large C0};
\node[fill=red] at (3.5,9.5) {\Large D1};
\node[fill=cyan] at (3.5,8.5) {\Large D2};
\node[fill=red] at (3.5,7.5) {\Large D3};
\node[fill=cyan] at (3.5,6.5) {\Large D4};
\node[fill=red] at (3.5,5.5) {\Large D5};
\node[fill=cyan] at (3.5,4.5) {\Large D6};
\node[fill=white] at (3.5,3.5) {\Large D7};
\node[fill=white] at (3.5,2.5) {\Large D8};
\node[fill=white] at (3.5,1.5) {\Large D9};
\node[fill=white] at (3.5,0.5) {\Large D0};
\node[fill=cyan] at (4.5,9.5) {\Large E1};
\node[fill=cyan] at (4.5,8.5) {\Large E2};
\node[fill=cyan] at (4.5,7.5) {\Large E3};
\node[fill=white] at (4.5,6.5) {\Large E4};
\node[fill=red] at (4.5,5.5) {\Large E5};
\node[fill=white] at (4.5,4.5) {\Large E6};
\node[fill=cyan] at (4.5,3.5) {\Large E7};
\node[fill=red] at (4.5,2.5) {\Large E8};
\node[fill=red] at (4.5,1.5) {\Large E9};
\node[fill=white] at (4.5,0.5) {\Large E0};
\node[fill=white] at (5.5,9.5) {\Large F1};
\node[fill=white] at (5.5,8.5) {\Large F2};
\node[fill=cyan] at (5.5,7.5) {\Large F3};
\node[fill=red] at (5.5,6.5) {\Large F4};
\node[fill=red] at (5.5,5.5) {\Large F5};
\node[fill=white] at (5.5,4.5) {\Large F6};
\node[fill=white] at (5.5,3.5) {\Large F7};
\node[fill=cyan] at (5.5,2.5) {\Large F8};
\node[fill=cyan] at (5.5,1.5) {\Large F9};
\node[fill=red] at (5.5,0.5) {\Large F0};
\node[fill=cyan] at (6.5,9.5) {\Large G1};
\node[fill=white] at (6.5,8.5) {\Large G2};
\node[fill=white] at (6.5,7.5) {\Large G3};
\node[fill=cyan] at (6.5,6.5) {\Large G4};
\node[fill=cyan] at (6.5,5.5) {\Large G5};
\node[fill=red] at (6.5,4.5) {\Large G6};
\node[fill=red] at (6.5,3.5) {\Large G7};
\node[fill=cyan] at (6.5,2.5) {\Large G8};
\node[fill=red] at (6.5,1.5) {\Large G9};
\node[fill=white] at (6.5,0.5) {\Large G0};
\node[fill=cyan] at (7.5,9.5) {\Large H1};
\node[fill=white] at (7.5,8.5) {\Large H2};
\node[fill=red] at (7.5,7.5) {\Large H3};
\node[fill=white] at (7.5,6.5) {\Large H4};
\node[fill=white] at (7.5,5.5) {\Large H5};
\node[fill=cyan] at (7.5,4.5) {\Large H6};
\node[fill=red] at (7.5,3.5) {\Large H7};
\node[fill=white] at (7.5,2.5) {\Large H8};
\node[fill=cyan] at (7.5,1.5) {\Large H9};
\node[fill=red] at (7.5,0.5) {\Large H0};
\node[fill=red] at (8.5,9.5) {\Large I1};
\node[fill=cyan] at (8.5,8.5) {\Large I2};
\node[fill=white] at (8.5,7.5) {\Large I3};
\node[fill=red] at (8.5,6.5) {\Large I4};
\node[fill=white] at (8.5,5.5) {\Large I5};
\node[fill=white] at (8.5,4.5) {\Large I6};
\node[fill=red] at (8.5,3.5) {\Large I7};
\node[fill=cyan] at (8.5,2.5) {\Large I8};
\node[fill=white] at (8.5,1.5) {\Large I9};
\node[fill=cyan] at (8.5,0.5) {\Large I0};
\node[fill=white] at (9.5,9.5) {\Large J1};
\node[fill=red] at (9.5,8.5) {\Large J2};
\node[fill=white] at (9.5,7.5) {\Large J3};
\node[fill=cyan] at (9.5,6.5) {\Large J4};
\node[fill=red] at (9.5,5.5) {\Large J5};
\node[fill=red] at (9.5,4.5) {\Large J6};
\node[fill=cyan] at (9.5,3.5) {\Large J7};
\node[fill=white] at (9.5,2.5) {\Large J8};
\node[fill=cyan] at (9.5,1.5) {\Large J9};
\node[fill=cyan] at (9.5,0.5) {\Large J0};
\draw[thick, step=1cm, color=black] (0,0) grid (10,10);
\draw[line width=3pt, color=black] (0, 0) -- (0, 10) -- (10, 10) -- (10, 0) -- cycle;
\end{tikzpicture}
}
\hspace{0.4cm}
\scalebox{0.6}{
\begin{tikzpicture}[every node/.style={minimum size=1cm-\pgflinewidth, outer sep=0pt}]
\node[fill=red] at (0.5,9.5) {\Large D1};
\node[fill=red] at (0.5,8.5) {\Large D3};
\node[fill=white] at (0.5,7.5) {\Large D0};
\node[fill=cyan] at (0.5,6.5) {\Large D4};
\node[fill=cyan] at (0.5,5.5) {\Large D6};
\node[fill=white] at (0.5,4.5) {\Large D9};
\node[fill=white] at (0.5,3.5) {\Large D8};
\node[fill=cyan] at (0.5,2.5) {\Large D2};
\node[fill=white] at (0.5,1.5) {\Large D7};
\node[fill=red] at (0.5,0.5) {\Large D5};
\node[fill=cyan] at (1.5,9.5) {\Large H1};
\node[fill=red] at (1.5,8.5) {\Large H3};
\node[fill=red] at (1.5,7.5) {\Large H0};
\node[fill=white] at (1.5,6.5) {\Large H4};
\node[fill=cyan] at (1.5,5.5) {\Large H6};
\node[fill=cyan] at (1.5,4.5) {\Large H9};
\node[fill=white] at (1.5,3.5) {\Large H8};
\node[fill=white] at (1.5,2.5) {\Large H2};
\node[fill=red] at (1.5,1.5) {\Large H7};
\node[fill=white] at (1.5,0.5) {\Large H5};
\node[fill=white] at (2.5,9.5) {\Large F1};
\node[fill=cyan] at (2.5,8.5) {\Large F3};
\node[fill=red] at (2.5,7.5) {\Large F0};
\node[fill=red] at (2.5,6.5) {\Large F4};
\node[fill=white] at (2.5,5.5) {\Large F6};
\node[fill=cyan] at (2.5,4.5) {\Large F9};
\node[fill=cyan] at (2.5,3.5) {\Large F8};
\node[fill=white] at (2.5,2.5) {\Large F2};
\node[fill=white] at (2.5,1.5) {\Large F7};
\node[fill=red] at (2.5,0.5) {\Large F5};
\node[fill=red] at (3.5,9.5) {\Large I1};
\node[fill=white] at (3.5,8.5) {\Large I3};
\node[fill=cyan] at (3.5,7.5) {\Large I0};
\node[fill=red] at (3.5,6.5) {\Large I4};
\node[fill=white] at (3.5,5.5) {\Large I6};
\node[fill=white] at (3.5,4.5) {\Large I9};
\node[fill=cyan] at (3.5,3.5) {\Large I8};
\node[fill=cyan] at (3.5,2.5) {\Large I2};
\node[fill=red] at (3.5,1.5) {\Large I7};
\node[fill=white] at (3.5,0.5) {\Large I5};
\node[fill=cyan] at (4.5,9.5) {\Large G1};
\node[fill=white] at (4.5,8.5) {\Large G3};
\node[fill=white] at (4.5,7.5) {\Large G0};
\node[fill=cyan] at (4.5,6.5) {\Large G4};
\node[fill=red] at (4.5,5.5) {\Large G6};
\node[fill=red] at (4.5,4.5) {\Large G9};
\node[fill=cyan] at (4.5,3.5) {\Large G8};
\node[fill=white] at (4.5,2.5) {\Large G2};
\node[fill=red] at (4.5,1.5) {\Large G7};
\node[fill=cyan] at (4.5,0.5) {\Large G5};
\node[fill=cyan] at (5.5,9.5) {\Large E1};
\node[fill=cyan] at (5.5,8.5) {\Large E3};
\node[fill=white] at (5.5,7.5) {\Large E0};
\node[fill=white] at (5.5,6.5) {\Large E4};
\node[fill=white] at (5.5,5.5) {\Large E6};
\node[fill=red] at (5.5,4.5) {\Large E9};
\node[fill=red] at (5.5,3.5) {\Large E8};
\node[fill=cyan] at (5.5,2.5) {\Large E2};
\node[fill=cyan] at (5.5,1.5) {\Large E7};
\node[fill=red] at (5.5,0.5) {\Large E5};
\node[fill=white] at (6.5,9.5) {\Large B1};
\node[fill=cyan] at (6.5,8.5) {\Large B3};
\node[fill=cyan] at (6.5,7.5) {\Large B0};
\node[fill=white] at (6.5,6.5) {\Large B4};
\node[fill=cyan] at (6.5,5.5) {\Large B6};
\node[fill=white] at (6.5,4.5) {\Large B9};
\node[fill=red] at (6.5,3.5) {\Large B8};
\node[fill=red] at (6.5,2.5) {\Large B2};
\node[fill=red] at (6.5,1.5) {\Large B7};
\node[fill=cyan] at (6.5,0.5) {\Large B5};
\node[fill=white] at (7.5,9.5) {\Large J1};
\node[fill=white] at (7.5,8.5) {\Large J3};
\node[fill=cyan] at (7.5,7.5) {\Large J0};
\node[fill=cyan] at (7.5,6.5) {\Large J4};
\node[fill=red] at (7.5,5.5) {\Large J6};
\node[fill=cyan] at (7.5,4.5) {\Large J9};
\node[fill=white] at (7.5,3.5) {\Large J8};
\node[fill=red] at (7.5,2.5) {\Large J2};
\node[fill=cyan] at (7.5,1.5) {\Large J7};
\node[fill=red] at (7.5,0.5) {\Large J5};
\node[fill=white] at (8.5,9.5) {\Large C1};
\node[fill=red] at (8.5,8.5) {\Large C3};
\node[fill=white] at (8.5,7.5) {\Large C0};
\node[fill=red] at (8.5,6.5) {\Large C4};
\node[fill=cyan] at (8.5,5.5) {\Large C6};
\node[fill=red] at (8.5,4.5) {\Large C9};
\node[fill=cyan] at (8.5,3.5) {\Large C8};
\node[fill=red] at (8.5,2.5) {\Large C2};
\node[fill=cyan] at (8.5,1.5) {\Large C7};
\node[fill=white] at (8.5,0.5) {\Large C5};
\node[fill=red] at (9.5,9.5) {\Large A1};
\node[fill=white] at (9.5,8.5) {\Large A3};
\node[fill=red] at (9.5,7.5) {\Large A0};
\node[fill=white] at (9.5,6.5) {\Large A4};
\node[fill=red] at (9.5,5.5) {\Large A6};
\node[fill=cyan] at (9.5,4.5) {\Large A9};
\node[fill=red] at (9.5,3.5) {\Large A8};
\node[fill=cyan] at (9.5,2.5) {\Large A2};
\node[fill=white] at (9.5,1.5) {\Large A7};
\node[fill=cyan] at (9.5,0.5) {\Large A5};
\draw[thick, step=1cm, color=black] (0,0) grid (10,10);
\draw[line width=3pt] (0, 2) -- (8, 2);
\draw[line width=3pt] (0, 6) -- (8, 6);
\draw[line width=3pt] (4, 2) -- (4, 10);
\draw[line width=3pt] (8, 2) -- (8, 10);
\draw[line width=3pt, color=black] (0, 0) -- (0, 10) -- (10, 10) -- (10, 0) -- cycle;
\end{tikzpicture}
}
\caption{Coloring of $G_{10, 10}$ with shift pattern (right) and a visualization of the isomorphism}
\label{figure:shift_iso}
\end{figure}

\begin{table}
\centering
\caption{Experimental results for shift pattern of $Grid(10, 10, 3)$}
\label{table:10_10_3}

\begin{tabular}{ |c|c|c|c|c|c|c|c|c| } 
 \hline
 subgrid size & 2 & 3 & 4 & 5 & 6 & 7 & 8 & 9\\
 \hline
 left shift & unsat & unsat & sat & unsat & unsat & unsat & unsat & unsat\\
 \hline
 both shifts & unsat & unsat & sat & unsat & unsat & unsat & unsat & unsat\\
 \hline
\end{tabular}
\end{table}

Table \ref{table:10_10_3} shows the experimental results for shift pattern for subgrid size $2, 3, ..., 9$ for $Grid(10, 10, 3)$, where in ``left shift" patterns each subsequent row in a subgrid is shifted left, and in ``both shifts" both directions are allowed (so different subgrids could have different shift directions). We only investigated left shifts as a left-shifted subgrid can be converted to a right-shifted subgrid by simply permuting the rows: for example, in a 4-subgrid, the appropriate permutation is swapping the second and last rows. The results demonstrate that whether both directions are allowed generally does not make a difference; we shall see why this is true in the analysis part. On the other hand, it can be mathematically shown that subgrid size 10 is unsatisfiable, as in Proposition \ref{proposition:k2}.

It is also noteworthy that Steinbach's solution of $G_{18, 18}$ is isomorphic to one with shift patterns on size-3 subgrids, as shown in Figure~\ref{figure: steinbach_sol}. The isomorphism is labeled by the letters.

\subsection{Encoding shift pattern}
Encoding the shift pattern constraint is crucial to solving a grid coloring instance efficiently. To encode that ``this subgrid has a left shift pattern", without changing the preliminary encoding in section 2, we add binary clauses for each pair of cells that should be colored the same:
$$\Equal(x, y) := (\bar{x} \vee y) \wedge (x \vee \bar{y})$$
\par
In order to encode that ``this subgrid has either left shift pattern or right shift pattern", a pair of extra variables $(L, R)$ is used for each subgrid. 
Now, let $t_1 = c + \left((q - c + p - r) \text{ mod } b\right)$ and $t_2 = c + \left((q - c - p + r) \text{ mod } b\right)$.

$$\textsf{EqualLeft}(r, c, p, q, l) := \left(  {\bar x_{p,q}^l} \vee x_{r, t_1}^l \vee L \right) \wedge \left( {\bar x_{r, t_1}^l} \vee x_{p,q}^l \vee L \right)$$
$$\textsf{EqualRight}(r, c, p, q, l) := \left( {\bar x_{p,q}^l} \vee x_{r, t_2}^l \vee R \right) \wedge \left( {\bar x_{r, t_2}^l} \vee x_{p,q}^l \vee R \right) $$
$$\textsf{LeftOrRightShift}(r, c):=\!\!\!\!\!\! \bigwedge_{r+1\leq p \leq r+b} \bigwedge_{c\leq q \leq c+b} \bigwedge_{l \in [k]} \textsf{EqualLeft}(r, c, p, q, l) \wedge    \textsf{EqualRight}(r, c, p, q, l) \wedge \textsf{XOR}(L, R)$$
where $(r, c)$ is the index of the top-left cell in the subgrid, $b$ is the size of the subgrid, and $\textsf{XOR}$ is the usual exclusive-or.
\par
However, in either case, the binary and ternary clauses used hinders performance for local search solvers: when local search solvers choose one variable to flip say $x$, the other variable $y$ should be also flipped as well to satisfy $\Equal$. But $y$ is not guaranteed to be flipped as the next variable chosen, creating overhead for satisfying these simple clauses. The same happens for \textsf{EqualLeft} and \textsf{EqualRight}: exactly one of $L, R$ is true, forcing \textsf{EqualLeft} or \textsf{EqualRight} to behave exactly like the $\Equal$ clauses.
\par
The solution to this problem is to modify the preliminary encoding completely. The experiment in Table~\ref{table:10_10_3} shows that there is only need for encoding single-direction shifts, so our solution will target single-direction shifts. Instead of adding binary clauses to enforce that two variables should be assigned the same value, we map the two variables to the same variable. Thus, in each subgrid, only variables for the first row are created, and all the shifted copies on the other rows are referred to as the corresponding variable on the first row. The cells that do not belong to any subgrid are still assigned fresh variables.

\subsection{Necessary conditions for the shift pattern}\label{section:necessary}

We first analyze the necessary condition for a single subgrid to have the shift pattern in Proposition~\ref{proposition:k2} and its corollary.
\begin{proposition}
Let $k$ be a positive integer and $k^2 < n < k^2 + k$. Then there is no $k$-coloring of an $n$-subgrid with the shift pattern.
\label{proposition:k2}
\end{proposition}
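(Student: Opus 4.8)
The plan is to reduce the geometry of the shift pattern to a purely arithmetic statement about subsets of the integers modulo $n$. First I would record the key structural fact: in an $n$-subgrid with the (left) shift pattern, the color of the cell in row $i$ and column $j$ equals the color of the first-row cell in column $(i+j) \bmod n$; that is, the color depends only on $(i+j) \bmod n$. (For a right shift the relevant quantity is $(j-i) \bmod n$, and the argument below is identical since each is an affine function of $(i,j)$.) Writing $a : \{0,\dots,n-1\} \to [k]$ for the first row and $S_c = \{t : a_t = c\}$ for the set of diagonals receiving color $c$, the color classes partition the residues modulo $n$, so $\sum_{c} |S_c| = n$.

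Next I would translate ``no monochromatic rectangle'' into a condition on each $S_c$. A monochromatic rectangle of color $c$ is a choice of distinct rows $i_1 \ne i_2$ and distinct columns $j_1 \ne j_2$ with $i_1+j_1,\ i_1+j_2,\ i_2+j_1,\ i_2+j_2$ all lying in $S_c$ modulo $n$. Setting $d = j_2 - j_1 \ne 0$, this says there are two \emph{distinct} elements $x = i_1+j_1$ and $y = i_2+j_1$ of $S_c$ with $x+d$ and $y+d$ also in $S_c$; conversely, any such configuration yields a rectangle (take $i_1 = x$, $i_2 = y$, $j_1 = 0$, $j_2 = d$, all valid indices of the subgrid). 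Hence the subgrid is rectangle-free exactly when, for every color $c$ and every nonzero $d$, at most one $x \in S_c$ satisfies $x + d \in S_c$ — i.e.\ each $S_c$ is a \emph{Sidon set} (all of its pairwise differences are distinct) in the integers modulo $n$.

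The counting then finishes the argument. For a Sidon set $S$ the $|S|(|S|-1)$ ordered differences of distinct elements are pairwise distinct nonzero residues, so $|S|(|S|-1) \le n-1$. Under the hypothesis $n < k^2 + k$ we have $n - 1 \le k^2 + k - 2 < k(k+1)$, so $|S_c| \ge k+1$ would force $|S_c|(|S_c|-1) \ge k(k+1) > n-1$, a contradiction; therefore $|S_c| \le k$ for every color. Summing over the $k$ color classes gives $n = \sum_{c} |S_c| \le k \cdot k = k^2$, which contradicts $n > k^2$, completing the proof.

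I expect the main obstacle to be the careful verification of the rectangle--difference equivalence, in particular checking that the distinctness of rows and of columns corresponds precisely to two distinct elements of $S_c$ sharing a common nonzero difference, and confirming that the recovered index configuration genuinely lies inside the $n \times n$ subgrid. The reduction to $(i+j) \bmod n$ and the Sidon counting bound are routine once this correspondence is set up, and the two shift directions are handled uniformly by the affine symmetry.
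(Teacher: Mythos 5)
Your proof is correct. It rests on the same underlying observation as the paper's proof, but it is packaged quite differently, and the comparison is instructive. The paper works directly on the grid: by pigeonhole (using $n>k^2$) some color $c$ occupies at least $k+1$ cells of the first row; each such cell $a$ determines at least $k$ later rows in which its column is again colored $c$ (one row per other $c$-cell, because the shift wraps around); rectangle-freeness forces these row sets $f(a)$ to be pairwise disjoint; and $(k+1)k > n-1$ gives the contradiction. You instead encode the pattern as a map $\mathbb{Z}_n \to [k]$, show that rectangle-freeness is exactly the condition that each color class $S_c$ is a Sidon set in $\mathbb{Z}_n$, apply the counting bound $|S_c|(|S_c|-1) \le n-1 < k(k+1)$ to get $|S_c| \le k$ for \emph{every} color, and sum to reach $n \le k^2$. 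The disjointness of the paper's row sets is precisely your statement that all ordered differences within $S_c$ are distinct, so the two arguments hinge on the identical fact that a repeated difference creates a rectangle; the genuine differences are organizational. You avoid the pigeonhole step entirely (bounding every class rather than exhibiting one heavy class), which makes your argument the contrapositive of theirs, and your Sidon lemma is exactly the single-subgrid case of the paper's Proposition~\ref{proposition:self}, so your route makes explicit the link between this impossibility result and the necessary conditions of Section~\ref{section:necessary} (and to the difference-set constructions behind the Fenner et al.\ result cited in the introduction). What the paper's phrasing buys is a self-contained argument needing no modular reformulation; what yours buys is a reusable per-color bound ($|S_c| \lesssim \sqrt{n}$ for any $n$) and standard terminology that situates the obstruction in known combinatorics.
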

\begin{proof}
For any color assignment of the first row, by Pigeonhole Principle, there is a color $c \in [k]$ that occurs at least $k + 1$ times. For a fixed cell $a$ colored $c$ in the first row, there are $k$ subsequent rows in which another cell colored $c$ shares the same column as $a$; let the set of the $k$ rows be denoted $f(a)$. That is, $k$ rows need to be chosen for each of the $k+1$ cells colored $c$ in the first row. Moreover, for two distinct cells $a_1, a_2$ colored $c$ in the first row, $f(a_1) \cap f(a_2) = \varnothing$, since if row $r \in f(a_1) \cap f(a_2)$, then there is a monochromatic rectangle between row 1 and row $r$. However, there are only $k$ disjoint $k$-subsets of the remaining $n-1$ rows. So there is no valid $k$-coloring of this pattern.
\end{proof}
% do I need to spell it out in a corollary?
\begin{corollary}
If there is a $k$-coloring of an $n$-subgrid with shift pattern, then $n \leq k^2$ or $n \geq k^2 + k$.
\end{corollary}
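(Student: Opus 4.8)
The plan is to recognize that this corollary is nothing more than the contrapositive of Proposition~\ref{proposition:k2}, so the entire argument reduces to a logical rearrangement of what has already been proved. First I would observe that the conclusion ``$n \leq k^2$ or $n \geq k^2 + k$'' fails precisely when $n > k^2$ and $n < k^2 + k$ hold simultaneously, that is, exactly when $k^2 < n < k^2 + k$. Thus the negation of the corollary's conclusion is identical to the hypothesis of Proposition~\ref{proposition:k2}.

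With that observation in hand, the proof is immediate: suppose, toward a contradiction, that there is a $k$-coloring of an $n$-subgrid with the shift pattern but that the conclusion fails, so that $k^2 < n < k^2 + k$. Then Proposition~\ref{proposition:k2} applies directly to this $n$ and asserts that no such $k$-coloring can exist, contradicting our assumption. Hence whenever a shift-pattern $k$-coloring of an $n$-subgrid exists, we must have $n \leq k^2$ or $n \geq k^2 + k$.

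There is essentially no substantive obstacle here, since all the combinatorial content---the pigeonhole argument producing a color with at least $k+1$ occurrences in the first row, the disjointness of the sets $f(a)$, and the counting of disjoint $k$-subsets among the remaining $n-1$ rows---is already carried out in Proposition~\ref{proposition:k2}. The only point that merits a sentence of care is verifying that the open interval $(k^2, k^2 + k)$ is genuinely the complement, within the positive integers, of the two closed conditions $n \leq k^2$ and $n \geq k^2 + k$; once that boundary bookkeeping is confirmed, the contrapositive follows with no further work. I would therefore present the corollary as a one-line deduction from the proposition rather than reproving anything.
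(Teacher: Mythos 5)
Your proposal is correct and matches the paper's intent exactly: the paper states the corollary without proof precisely because it is the immediate contrapositive of Proposition~\ref{proposition:k2}, which is the one-line deduction you describe. Nothing is missing.
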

\par
On a higher level, necessary conditions on the distribution of colors on the entire grid can be obtained by counting gaps between occurrences of the same color. The length of a gap is defined as the number of cells between the endpoints plus one.
\par
Throughout this section, suppose we have $G_{xz, yz}$ with $k$ colors, divided into square subgrids of length $z$, and the subgrids are indexed by $[x] \times [y]$. When we speak of a ``column" or a ``row", we refer to the $z$ columns or $z$ rows that share subgrids with the same second or first coordinate. Suppose each subgrid has a right shift pattern.
Let $v_{c,i,j}$ denote the number of cells in the subgrid on the $i$th row and the $j$th column that are colored $c$ divided by $z$.
Let $\mathbf{v}_{c,j}$ be the vector $\begin{bmatrix} v_{c,1,j}\\ 
\dots\\ 
v_{c,y,j}
\end{bmatrix}$.

\begin{proposition}
For every column $j$,
$$ \sum_{c\in [k]} \mathbf{v}_{c,j}= \begin{bmatrix} z\\ 
\dots\\ 
z
\end{bmatrix} $$
\label{proposition:sum}
\end{proposition}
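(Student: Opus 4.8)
The plan is to reduce the stated vector identity to a single, per-subgrid counting fact and then read it off componentwise. The key observation is that each entry $v_{c,i,j}$, after multiplying by $z$, simply counts the cells of color $c$ in the $z\times z$ subgrid indexed by $(i,j)$, so summing over colors recovers the full cell count of that subgrid.

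First I would fix a column $j$ and an arbitrary row index $i$, and consider the subgrid $(i,j)$. By the one-color-per-cell constraint of equation (1) (\AtLeastOne\ together with \AtMostOne), every one of its $z^2$ cells receives exactly one color. Partitioning these cells according to their color therefore gives $\sum_{c\in[k]} \bigl(z\cdot v_{c,i,j}\bigr)=z^2$, since $z\cdot v_{c,i,j}$ is by definition the number of cells of color $c$ in that subgrid. Dividing by $z$ yields $\sum_{c\in[k]} v_{c,i,j}=z$.

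Second, since the $i$-th entry of the vector $\sum_{c\in[k]} \mathbf{v}_{c,j}$ is exactly $\sum_{c\in[k]} v_{c,i,j}$, the computation above shows every entry equals $z$, which is precisely the claimed identity. I would also remark, to justify that the definition of $v_{c,i,j}$ is sensible, that the shift pattern guarantees each $v_{c,i,j}$ is an integer: in a right-shifted subgrid every row is a cyclic permutation of the first, so color $c$ occurs the same number of times in each of the $z$ rows, whence its total count is $z$ times its count in the first row, and $v_{c,i,j}$ records exactly that first-row count.

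There is no substantive obstacle here; the statement follows from counting the cells of a single subgrid in two ways. The only points requiring care are the bookkeeping of index conventions (which coordinate ranges over subgrid-rows versus subgrid-columns) and confirming that each $v_{c,i,j}$ is well-defined as stated, both handled by the remarks above. It is worth emphasizing that the sum identity itself does not even invoke the shift structure, only that each subgrid is $z\times z$ with one color per cell; the shift pattern is needed solely to render the individual entries integral, which is what makes them a useful bookkeeping device for the color-distribution constraints that follow.
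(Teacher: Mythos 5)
Your proof is correct and takes essentially the same route as the paper, whose entire argument is the observation that the identity ``just restates that there are exactly $z^2$ cells in a subgrid''; your componentwise partition-by-color computation is simply that observation spelled out, with the integrality remark as a harmless bonus.
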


\begin{proof}
	This proposition just restates that there are exactly $z^2$ cells in a subgrid.
\end{proof}

The following proposition concerns monochromatic rectangles with two corners in the same subgrid.

\begin{proposition}
	For any color $c$,  any column $j$, 	$$\sum^{y}_{i=1} v_{c,i,j}^2- v_{c,i,j} \leq z-1$$ when $z$ is odd.
	
	$$\sum^{y}_{i=1}v_{c,i,j}^2- v_{c,i,j} \leq z-2$$ when $z$ is even.
	\label{proposition:self}
\end{proposition}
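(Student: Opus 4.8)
The plan is to prove the bound by counting, inside a single column $j$, the monochromatic pairs of cells that lie in a common row (equivalently, the horizontal edges of potential monochromatic rectangles whose two endpoints share a subgrid), and then comparing this count against what rectangle-freeness permits.

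First I would record the basic consequence of the shift pattern: inside the subgrid $(i,j)$ every row is a cyclic shift of its first row, and a cyclic shift permutes the cells without changing how many carry each color. Hence every one of the $z$ rows of this subgrid contains exactly the same number of cells colored $c$; writing $m_{i} := v_{c,i,j}$ for that number, this also explains the normalization, since the subgrid then holds $z\,m_i$ cells of color $c$ in total and $v_{c,i,j}$ is a nonnegative integer. This integrality is precisely what will drive the even case later.

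Next I count, within column $j$ --- that is, within the width-$z$ vertical strip formed by the subgrids sharing second coordinate $j$ --- the number $N$ of unordered pairs of cells that lie in the same row and are both colored $c$. Because the strip is exactly one subgrid wide, the two cells of any such pair automatically belong to the same subgrid, which is the sense in which these pairs furnish rectangles ``with two corners in the same subgrid''. Each row of subgrid $(i,j)$ contributes $\binom{m_i}{2}$ such pairs and the subgrid has $z$ rows, so $N = z\sum_{i}\binom{m_i}{2}$. On the other hand, rectangle-freeness forbids any pair of the $z$ columns of the strip from being simultaneously colored $c$ in two distinct rows; thus each of the $\binom{z}{2}$ column-pairs accounts for at most one same-row monochromatic pair, giving $N \le \binom{z}{2}$.

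Combining the two expressions for $N$ yields $z\sum_i \binom{m_i}{2} \le \binom{z}{2}$, i.e. $\sum_i \binom{m_i}{2} \le \frac{z-1}{2}$. Since $m_i^2 - m_i = 2\binom{m_i}{2}$, the claim is exactly $\sum_i \binom{m_i}{2} \le \lfloor \frac{z-1}{2}\rfloor$. For odd $z$ the right-hand side already equals $\frac{z-1}{2}$, and doubling gives the first bound. For even $z$ the only extra ingredient is integrality: $\sum_i \binom{m_i}{2}$ is a nonnegative integer while $\frac{z-1}{2}$ is not, so the sum is at most $\frac{z-2}{2}$, and doubling gives the second bound. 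The step that needs the most care is the bookkeeping behind $N = z\sum_i \binom{m_i}{2}$ (relying on the fact that \emph{every} row of a shifted subgrid, not merely its first, carries exactly $m_i$ cells of color $c$), together with the observation that the even case requires no separate combinatorial obstruction --- the improvement from $z-1$ to $z-2$ is purely the parity rounding of an integer count.
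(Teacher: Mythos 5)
Your proof is correct, and it takes a genuinely different route from the paper's. The paper argues through the diagonal/gap structure induced by the shift: the color-$c$ cells of subgrid $(i,j)$ form $v_{c,i,j}$ wrap-around diagonals, each unordered pair of diagonals consumes the two gap lengths $\{a, z-a\}$, a gap length repeated anywhere in the column forces a monochromatic rectangle, and only $z-1$ gap lengths exist --- with the self-paired length $z/2$ additionally forbidden when $z$ is even, which is where the paper's $z-2$ comes from. You instead run a Zarankiewicz-style double count inside the width-$z$ strip: the shift pattern enters only to guarantee that every row of subgrid $i$ carries exactly $v_{c,i,j}$ cells of color $c$, so same-row monochromatic pairs number $z\sum_i\binom{v_{c,i,j}}{2}$, while rectangle-freeness caps this at $\binom{z}{2}$ because each column pair can host at most one such pair; the even case then follows from integrality alone, since $(z-1)/2$ is not an integer. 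The two mechanisms are connected (a diagonal pair at offset $a\neq z/2$ hits exactly the $z$ column pairs at cyclic distance $\{a,z-a\}$ once each, while a pair at offset $z/2$ hits $z/2$ column pairs twice each), but yours is more elementary and never needs to isolate the $z/2$ obstruction --- the parity rounding absorbs it automatically --- and it is stated at a level of rigor the paper's sketch does not reach. What the paper's gap formalism buys in return is the synchronization language that is reused in Proposition~\ref{proposition:scalar_product} for rectangles whose corners lie in four distinct subgrids, a situation your row-local count does not address.
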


\begin{proof}
	The number of possible gaps is $z-1$, but if any gaps are repeated we get a monochromatic rectangle within the column. Note that due to wrap around of the shift pattern, a gap of length $a$ becomes a gap of length $z - a$. So if $z$ is even we also have to avoid any $\frac{z}{2}$ length gaps, and the limit reduces to $z-2$. 
	
	Consider ``gluing" the left and right edges of every subgrid together, so that the cells of the same color form diagonals. The number of gaps in a single subgrid with $v_{c,i,j}$ many diagonals of color $c$ is $v_{c,i,j}(v_{c,i,j}-1)$, since two gaps are created with each pair of distinct diagonals, and the same gaps cannot be used in any other subgrid on the same column.
\end{proof}

The following proposition concerns monochromatic rectangles with all four corners in four distinct subgrids.

\begin{proposition}
For every color $c$, for every pair of distinct columns $j_1, j_2$,

$$\mathbf{v}_{c,j_1} \cdot \mathbf{v}_{c,j_2} \leq z$$
\label{proposition:scalar_product}
\end{proposition}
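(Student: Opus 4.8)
The plan is to prove the bound by a double-counting argument applied to triples consisting of one row together with one column from each of the two block columns. First I would isolate the single place where the shift hypothesis is used: within a subgrid the right shift pattern makes every actual row a cyclic shift of the first row, so each color $c$ occurs with the same multiplicity in every row of that subgrid. Since the subgrid has $z$ rows and, by the very definition of $v_{c,i,j}$, contains exactly $z\,v_{c,i,j}$ cells of color $c$, that common per-row multiplicity must equal $v_{c,i,j}$. Equivalently, each of the $v_{c,i,j}$ color-$c$ diagonals is a transversal meeting every actual row and every actual column of the subgrid exactly once, so color $c$ appears exactly $v_{c,i,j}$ times in each actual row (and each actual column) of the subgrid at block position $(i,j)$.

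With this in hand, fix the color $c$ and two distinct block columns $j_1 \neq j_2$, and let $\mathcal{T}$ be the set of triples $(\rho, C_1, C_2)$ where $\rho$ is an actual row, $C_1$ is an actual column lying in block column $j_1$, $C_2$ is an actual column lying in block column $j_2$, and both $(\rho, C_1)$ and $(\rho, C_2)$ are colored $c$. Counting $\mathcal{T}$ by rows: an actual row $\rho$ belongs to a unique block row $i$, where by the structural fact it meets $v_{c,i,j_1}$ cells of color $c$ among the columns of $j_1$ and $v_{c,i,j_2}$ among those of $j_2$, hence contributes $v_{c,i,j_1}\,v_{c,i,j_2}$ triples. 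As each block row consists of $z$ actual rows, summing gives $|\mathcal{T}| = z \sum_i v_{c,i,j_1} v_{c,i,j_2} = z\,(\mathbf{v}_{c,j_1} \cdot \mathbf{v}_{c,j_2})$.

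Counting $\mathcal{T}$ the other way, by column pairs, supplies the upper bound. There are exactly $z^2$ pairs $(C_1, C_2)$, and for each fixed pair rectangle-freeness allows at most one row $\rho$ with both cells colored $c$: two such rows $\rho \neq \rho'$ would produce the monochromatic rectangle $(\rho, C_1), (\rho', C_1), (\rho, C_2), (\rho', C_2)$, whose four corners occupy four distinct subgrids exactly when $\rho, \rho'$ lie in different block rows, matching the setting this proposition is named for. Therefore $|\mathcal{T}| \leq z^2$, and comparing the two counts yields $z\,(\mathbf{v}_{c,j_1} \cdot \mathbf{v}_{c,j_2}) \leq z^2$, that is, $\mathbf{v}_{c,j_1} \cdot \mathbf{v}_{c,j_2} \leq z$, as claimed.

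The main obstacle, and the real content, is the first step: recognizing that the per-row multiplicity of each color in a shifted subgrid is forced to be exactly $v_{c,i,j}$, which is precisely the quantity appearing in the dot product. This identity is what lets the row-count reproduce $\mathbf{v}_{c,j_1} \cdot \mathbf{v}_{c,j_2}$ on the nose; the column-pair side uses only rectangle-freeness and no structural information. I would be careful to justify the wrap-around transversal claim cleanly, since it is what simultaneously pins down both the per-row and per-column counts, and I would note that this proposition complements the within-column analysis of Proposition~\ref{proposition:self} by instead controlling the genuinely cross-subgrid rectangles.
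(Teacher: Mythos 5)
Your proof is correct, and it reaches the bound by a genuinely different route than the paper's. The paper works with the color-$c$ diagonals directly: it tabulates the possible horizontal gaps between a diagonal in block $(i,j_1)$ and one in block $(i,j_2)$, pairs up gaps identified by wrap-around to conclude there are only $z$ admissible gap values, and then argues that rectangle-freeness forces every pair of diagonals --- whether in the same block row or in different ones --- to occupy a distinct value; this last step needs a separate twist (converting an equal horizontal gap within a block row into an equal vertical gap) before summing the $v_{c,i,j_1}v_{c,i,j_2}$ diagonal pairs per block row. Your double count of $\mathcal{T}$ discards the gap bookkeeping entirely: the shift hypothesis enters exactly once, through the per-row multiplicity identity, and rectangle-freeness enters exactly once, as ``at most one row per ordered column pair,'' of which there are $z^2$; the inequality $z\,(\mathbf{v}_{c,j_1}\cdot\mathbf{v}_{c,j_2})\leq z^2$ then falls out with no case analysis. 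Besides being shorter and easier to make fully rigorous, your argument is strictly more general: nothing in it uses the assumption that all subgrids shift in the \emph{same} direction, since per-row multiplicities are direction-independent, so the same bound holds verbatim for the ``both shifts'' pattern. That is a real dividend: in Section~\ref{section:discussions} the authors single out Proposition~\ref{proposition:scalar_product} as the obstacle to extending the necessary conditions to mixed shift directions, on the grounds that the gap-traversal synchronization fails there and bounding the scalar product becomes ``much more complex''; your proof shows the bound itself survives unchanged. What the paper's gap framework buys in exchange is structural information your count ignores --- it identifies exactly which column pairs a diagonal pair traverses, which is the same language in which Proposition~\ref{proposition:self} is cast, and it localizes where the common shift direction actually matters.
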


\begin{proof}
	Suppose there are $c$-colored monochromatic rectangles between four different blocks.
	These happen if and only if there there are two horizontal gaps of length $l$ along the same columns, which in turn happen if and only if two gaps of equal lengths appear in two pairs of blocks along the same block columns, as the shift pattern will synchronize the gaps to occur in the same places.

	In any pair of columns $j_1$ and $j_2$, this means the number of different possible gaps has an upper limit. Let $\Delta j= |j_1-j_2|$. We detail below (Figure~\ref{figure: pairs}) the possible gaps, most of these come in pairs due to the shift pattern wrapping around.
	
\begin{figure}[h]
	\begin{center}
	\begin{tabular}{|c|c|}
	\hline
	$z\Delta j+ 0$ & \\
	$z\Delta j+ 1$ & $z\Delta j-z+1$\\
	$z\Delta j+ 2$ & $z\Delta j-z+2$\\
	\dots & \dots\\
	$z\Delta j+ z-1$ & $z\Delta j-1$\\
	\hline
	\end{tabular}
\caption{Pairs of possible gaps \label{figure: pairs}}
\end{center}
\end{figure}
	
	This means there is an upper limit of $z$ possible gaps, but the number of gaps along each row $i$ can be up to $v_{c,i,j_1}\cdot v_{c,i,j_2}$.
	
	To show that the same gap cannot appear twice within one row on different pairs of diagonals, we appeal to the fact that a horizontal gap within a subgrid also becomes a vertical gap within a subgrid. This is because there are two sets of diagonals $\{d_1, d_2\}$ and $\{d_3, d_4\}$ with the same horizontal gap, $d_1, d_2$ can both be shifted horizontally to be the places of $d_3$ and $d_4$, but that same shift appears vertically giving a monochromatic rectangle. 
		
	Hence there are $v_{c,i,j_1}\cdot v_{c,i,j_2}$ gaps used in each row.
    The sum of gaps used is the scalar product.
	
\end{proof}

% \begin{remark}
% Proposition \ref{proposition:scalar_product} means that, intuitively speaking, for columns with a lot of one particular color, the angle between the two vectors needs to be high. We shall see this in the maximal solutions provided later in this section.
% \end{remark}

\begin{remark}
We stress that Proposition \ref{proposition:sum}, Proposition \ref{proposition:self} and Proposition \ref{proposition:scalar_product} put together does not give us a sufficient condition. For example, in the case of $G_{16, 16}$ with 4 colors, having $c_{ij} = 1$ for each color and each $i, j$ clearly satisfies the constraints, but experiment shows that there is no such 4-coloring with this color distribution.
\end{remark}
Since the argument for rows are symmetric, Proposition \ref{proposition:sum}, Proposition \ref{proposition:self}, and Proposition \ref{proposition:scalar_product} can be restated and proved in terms of row vectors. Combining these constraint gives solutions for color distributions, and we shall call a set of $k$ matrix solutions for the $k$ colors a ``color distribution solution".

\subsection{New results}
The shift pattern greatly reduces search space, as only the first row in each subgrid needs to be chosen. This pattern can be iterated for one more layer: by adding ``midgrids" that are made up of smaller subgrids and enforcing that subgrids on subsequent rows are shifted copies of those on the first row in a similar fashion, the search space can be further reduced. This is extremely helpful to solving instances of larger grids, and significantly reduces the running time. Figure~\ref{figure:18_18_4_3} shows 4-coloring of $G_{18, 18}$ with 3-subgrids and 9-midgrids. \texttt{CaDiCaL} found this solution in under 1 second; previously, according to Fenner et al.~\cite{fenner2010rectangle}, it took the SAT solver \texttt{clasp} roughly 7 hours to find a cyclic-reusable assignment of $G_{18, 18}$. For 5-coloring of $G_{24, 24}$ with 4-subgrids, 12-midgrids, diagonal and anti-diagonal pattern, \texttt{CaDiCaL} found the solution in Figure~\ref{figure:24_24_5_6} in 2.46 seconds. 
\par
In addition, it is noteworthy that the 4-coloring of $G_{18, 18}$ in Figure~\ref{figure:18_18_4_3} is a ``new" solution, in the sense that it is not isomorphic to the solution given by Fenner et al.~\cite{fenner2010rectangle}. Thus we have reasons to believe that the shift pattern is truly a new pattern.
\begin{figure}
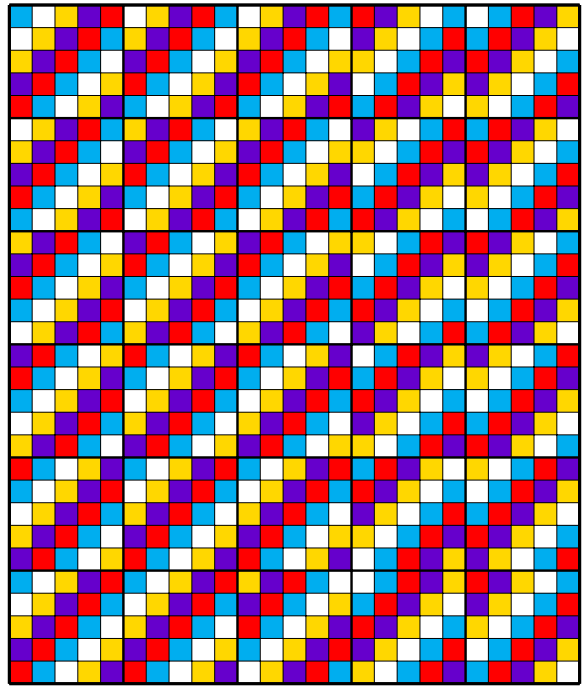

    \centering
\scalebox{0.43}{
% [inline block 0: 2 envs, 39618 chars -> data_tex | \begin{tikzpicture}[every node/.style={minimum size=1cm-\pgflinewidth, outer sep=0pt}] \node[fill=cyan] at (0.5,17.5) {}...]


}
    \caption{5-coloring of $Grid(30, 25, 5)$ with 5-subgrids}
    \label{figure: 30_25_5_5}
\end{figure}

In addition to shift patterns, a feasible color distribution solution boosts performance for solving large grids significantly. \texttt{PalSAT} could not solve $G_{25, 25}$ with 5-subgrids within 24 hours. With the constraint that $c_{ij} = 1$ for each color and each $i, j$, \texttt{PalSAT} was able to find a solution within 1 second. The solution can be extended to a 30 by 25 grid as shown in Figure~\ref{figure: 30_25_5_5}.

\section{Further discussions on shift pattern}\label{section:discussions}
\subsection{Attempts on $G_{26, 26}$ with shift pattern}
% I divided this into subsections, hopefully it's less like a mouthful
\begin{figure}
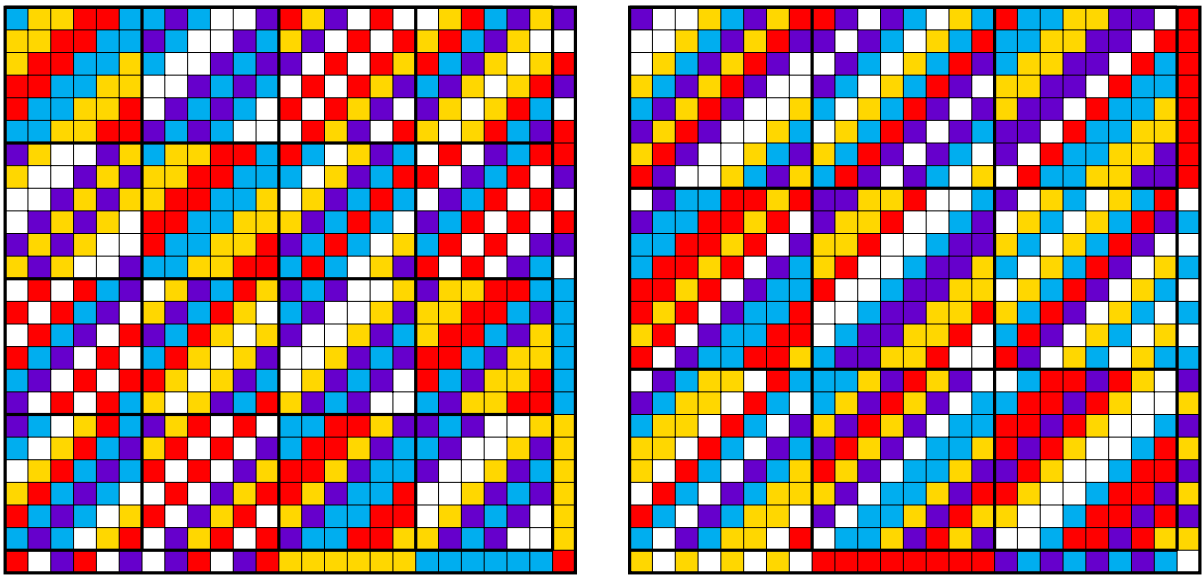

    \centering
    \scalebox{0.3}{
% [inline block 1: 3 envs, 67393 chars -> data_tex | \begin{tikzpicture}[every node/.style={minimum size=1cm-\pgflinewidth, outer sep=0pt}] \node[fill=red] at (0.5,23.5) {};...]

}
    \caption{5-colorings of $G_{25, 25}$ with 6-subgrids (left) and 8-subgrids (right)}
    \label{figure: 25_25_5_6and8}
\end{figure}
In this section, we describe our attempts on $G_{26, 26}$ with shift pattern by casing on different subgrid sizes. As we shall see, the case where the entire 26 by 26 grid is divided into subgrids is impossible. Thus, we consider the case where shift pattern is enforced on a smaller grid, and attempt to constrain the remaining rows and columns by partially enforcing the shift pattern.

\subsubsection{Infeasible case: Shift pattern on 26 by 26}
If there are no remaining columns and rows, the only choices left are 2, 13 and 26. By Proposition~\ref{proposition:k2}, we know that the subgrid size cannot be 26. If a 5-coloring of subgrid size 2 or 13 exists, then it must satisfy the necessary conditions in Section~\ref{section:necessary}. For the case of size 2, there exists five 13 by 13 matrices of non-negative integers, where column vector(s) and row vector(s) satisfy Propositions~\ref{proposition:sum}, \ref{proposition:self}, and \ref{proposition:scalar_product}. The similar solutions must exist for analogous constraints for size 13. However, we encoded the constraints into the \texttt{Z3} Theorem Prover and it reported unsatisfiable for both cases. Therefore, no 5-colorings of such shift pattern exist.
\par
Thus, we turn our attention to finding solutions that have shift pattern for the upper-left 25 by 25 or 24 by 24 part, and constrain the remaining column(s) and row(s) as appropriate.
\subsubsection{Shift pattern on 25 by 25}
In the case of 25 by 25, the possible subgrid sizes are 5 and 25.
For subgrid size 5, as mentioned in Section~\ref{section:necessary}, an obvious color distribution is $c_{ij} = 1$ for each color and each $i, j$. We attempted to find a coloring of $G_{26, 26}$ which has this pattern on the 25 by 25 part and left the last row and column without additional constraints. \texttt{PalSAT} was unable to find a satisfying assignment in 24 hours, with only one unsatisfied clause. This instance is unlikely to be satisfiable because of the cell at the bottom-right corner: under this constraint, each 5 by 1 (or 1 by 5) group of cells can be assigned the same color in the last column (or row), but this leaves no legal color for the bottom-right cell, since 5 groups exhaust all colors. There is another possible color distribution solution given by \texttt{Z3}, as shown in Table~\ref{table:25_25_5_5}. However, this constraint is outright unsatisfiable for $G_{26, 26}$, as reported by \texttt{CaDiCaL} in under 10 seconds. We faced a similar difficulty to extend a solution of $G_{25, 25}$ with a 25-subgrid, and this case is harder than 5-subgrids in the sense that we cannot restrict the grid to a non-trivial color distribution constraint, because there is only one subgrid.
\begin{table}[]
    \centering
\begin{tabular}{|@{~\,}c@{~\,}|@{~\,}c@{~\,}|@{~\,}c@{~\,}|@{~\,}c@{~\,}|@{~\,}c@{~\,}|}
\hline
2&1&0&1&1\\
\hline
1&0&2&2&0\\
\hline
1&2&1&1&0\\
\hline
0&1&0&2&1\\
\hline
1&1&2&0&2\\
\hline
\end{tabular}
\,
\begin{tabular}{|@{~\,}c@{~\,}|@{~\,}c@{~\,}|@{~\,}c@{~\,}|@{~\,}c@{~\,}|@{~\,}c@{~\,}|}
\hline
1&0&2&2&1\\
\hline
0&2&1&0&2\\
\hline
2&1&1&0&1\\
\hline
1&2&1&1&0\\
\hline
2&0&0&1&1\\
\hline
\end{tabular}
\,
\begin{tabular}{|@{~\,}c@{~\,}|@{~\,}c@{~\,}|@{~\,}c@{~\,}|@{~\,}c@{~\,}|@{~\,}c@{~\,}|}
\hline
1&1&0&0&2\\
\hline
1&1&0&2&1\\
\hline
0&1&1&0&2\\
\hline
2&0&2&1&1\\
\hline
0&2&2&1&0\\
\hline
\end{tabular}
\,
\begin{tabular}{|@{~\,}c@{~\,}|@{~\,}c@{~\,}|@{~\,}c@{~\,}|@{~\,}c@{~\,}|@{~\,}c@{~\,}|}
\hline
1&2&1&0&1\\
\hline
2&0&2&0&1\\
\hline
0&1&2&2&1\\
\hline
0&1&0&1&2\\
\hline
2&1&0&2&0\\
\hline
\end{tabular}
\,
\begin{tabular}{|@{~\,}c@{~\,}|@{~\,}c@{~\,}|@{~\,}c@{~\,}|@{~\,}c@{~\,}|@{~\,}c@{~\,}|}
\hline
0&1&2&2&0\\
\hline
1&2&0&1&1\\
\hline
2&0&0&2&1\\
\hline
2&1&2&0&1\\
\hline
0&1&1&1&2\\
\hline
\end{tabular}
    \caption{Color distribution solution for $G_{25, 25}$ with 5-subgrid shifts output by \texttt{Z3}}
    \label{table:25_25_5_5}
\end{table}
\subsubsection{Shift pattern on 24 by 24, under further constraints}
This brings us to 24 by 24, which has more possible subgrid sizes. Directly solving $G_{24, 24}$ for subgrid sizes ranging from 3 to 10 shows that subgrid sizes $\{3, 4, 5, 6, 8, 10\}$ are satisfiable.
%TODO: run experiment on ls5 and insert table of run-time
We also obtain solutions of $G_{25, 25}$ with subgrid sizes 6 and 8, with the additional constraint that all subgrids (or pairs of subgrids) on the diagonal have the same colorings. However, \texttt{PalSAT} was unable to solve $G_{26, 26}$ with the similar shift pattern and diagonal patterns in 24 hours. The reason is likely to be that the unconstrained last two rows and columns add too much complexity to the search space.
\par
Therefore, we sought to constrain them with a ``partial shift pattern": for example, take the upper-left 26 by 26 part of a 32 by 32 grid, for subgrid size 8. In this way, the 25th row under each subgrid is a shifted copy of the 24th row, and the columns are also constrained to be alternating colors, as observed in solutions of $G_{25, 25}$ in Figure~\ref{figure: 25_25_5_6and8}. However, this was not enough: \texttt{PalSAT} could not get under 2 unsatisfied clauses.
\par
So we need further constraints. The most successful constraint so far is to fix the number of occurrences of colors in each subgrid, but can we utilize it in the last two rows?
The idea is to fix the number of occurrences of some color in each ``partial subgrid" in the last two rows. In this way, searching for a 5-coloring of $G_{26, 26}$ is reduced to finding a feasible set of color distribution solutions. In particular, for subgrid sizes in $\{3, 4, 6\}$, the partial subgrids on the last two rows are ``complete", in the sense that, if there is a monochromatic rectangle in a complete subgrid, then there is a monochromatic rectangle in the first two rows; this is because the only possible gaps are of length 1 and 2. Therefore, for sizes 3, 4 and 6, refuting the existence of a set of color distribution solutions for $G_{27, 24}$, $G_{28, 24}$, $G_{30, 24}$ respectively proves that no 5-coloring for $G_{26, 26}$ with such (complete and partial) shift pattern exists. Size 8 does not enjoy this property, because gaps of length 3 are possible. Nonetheless, in this case a color distribution solution is likely to lead to a solution.
\par
Thus, we attempted to prove or disprove the existence of color distribution solutions for sizes in $\{3, 4, 6, 8\}$ via the \texttt{Z3} Theorem Prover. \texttt{Z3} reported unsatisfiable for size 8 in 5 minutes, and did not terminate in 4 days for all other sizes. Directly solving these constraints for an integer solution is perhaps too brute-force, when there is no obvious solution or refutation. % I know this sentence is probably waste of space here, but I think I should make some sort of comment on this phenomenon
% actually I want to run this on ls5, but it seems that I cannot pip install on ls5
\subsection{Further open problems}
\subsubsection{Both shifts vs left shifts}
We have seen in Table~\ref{table:10_10_3} that whenever a ``both shifts" solution of a certain subgrid size exists, a ``left shift" solution of the same size exists as well. This leads us to make the following conjecture:
\begin{conjecture}
For $m, n, k, z \in \mathbb{N}^+$, if a $k$-coloring of $G_{m, n}$ with shift pattern in both directions on $z$-subgrids exists, then a  $k$-coloring of $G_{m, n}$ with shift pattern in only one direction on $z$-subgrids exists.
\end{conjecture}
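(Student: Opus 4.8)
The plan is to exploit the symmetry operations (*) from Definition~\ref{def: iso}, which are known to map valid colorings to valid colorings, in order to convert shift directions block by block. Write $m = xz$ and $n = yz$ so that the full $z$-subgrids form an $x \times y$ array (leftover rows and columns that belong to no block carry fresh variables, are unaffected by any direction choice, and may be copied verbatim), and encode the configuration of directions as a matrix $D \in \mathbb{F}_2^{x \times y}$, with $D_{a,b}=0$ for a left-shifted block and $D_{a,b}=1$ for a right-shifted block. The goal is then to transform a given valid coloring with direction matrix $D$ into a valid coloring with direction matrix $0$.

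First I would isolate the single-block reversal $\pi$ that fixes the top row of a $z$-block and reverses the remaining $z-1$ rows. Because a right shift by $j$ equals a left shift by $z-j$, the ordered list of rows of a right-shifted block is exactly the ``reverse-after-the-first'' of the ordered list of rows of a left-shifted block with the same top row; hence $\pi$ turns a right-shifted block into a left-shifted one (and conversely) irrespective of the block's content. The crucial observation is that applying $\pi$ to all $z$ rows of an \emph{entire} block-row is a global permutation of the rows of $G_{m,n}$, so by operation (2) of (*) it preserves rectangle-freeness while flipping $D_{a,b}$ for every $b$ in that block-row. A symmetric statement holds for block-columns: reversing the trailing $z-1$ columns of a block-column is a global column permutation (it additionally reverses the top-row content of each affected block, which is harmless), and it flips $D_{a,b}$ for every $a$.

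On the level of direction matrices these row- and column-reversals generate the affine set $\{\, D \oplus (r\mathbf{1}^\top \oplus \mathbf{1}c^\top) : r \in \mathbb{F}_2^{x},\ c \in \mathbb{F}_2^{y}\,\}$. Thus the all-left matrix $0$ is reachable precisely when $D = r\mathbf{1}^\top \oplus \mathbf{1}c^\top$ for some $r,c$, equivalently when every $2\times 2$ ``minor'' $D_{a,b}\oplus D_{a,b'}\oplus D_{a',b}\oplus D_{a',b'}$ vanishes. This already settles the conjecture for every both-shifts coloring whose direction matrix has this rank-one form, in particular whenever the block array consists of a single block-row or a single block-column.

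The hard part will be direction matrices that are \emph{not} of the form $r\mathbf{1}^\top \oplus \mathbf{1}c^\top$: for these, no sequence of the operations (*) can reach the all-left configuration, so one must build a genuinely different single-direction coloring rather than transform the given one. The natural line of attack is that the per-block color counts $v_{c,i,j}$ are invariant under the choice of shift direction, so a mixed-direction coloring and a hypothetical single-direction coloring are subject to the same counting constraints of Propositions~\ref{proposition:sum}, \ref{proposition:self}, and \ref{proposition:scalar_product}; one would then try to show that some single-direction filling realizing these counts must exist. The obstacle is exactly the phenomenon flagged in the Remark after Proposition~\ref{proposition:scalar_product}: those conditions are necessary but not sufficient, so realizability cannot be read off from the counts alone. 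Closing the gap would require a finer analysis of how gaps synchronize under a uniform direction, together with understanding whether a non-rank-one direction matrix can occur in a valid coloring at all.
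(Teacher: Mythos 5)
This statement is a \emph{conjecture} that the paper leaves open: there is no proof in the paper to compare against, only a discussion of why verification is difficult (Propositions~\ref{proposition:sum} and~\ref{proposition:self} survive in the mixed-direction setting, but the gap-synchronization argument behind Proposition~\ref{proposition:scalar_product} breaks down when two subgrids in the same block-columns shift in opposite directions). Your proposal is likewise not a proof, and you say so yourself; so the question is only whether your partial progress is sound and how it relates to the paper's own discussion.

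The part you do prove is correct and goes somewhat beyond what the paper makes explicit. The single-block reversal (fix the top row, reverse the remaining $z-1$ rows) indeed swaps left- and right-shifted blocks; this is the same observation the paper uses to justify testing only left shifts (``a left-shifted subgrid can be converted to a right-shifted subgrid by simply permuting the rows''). Your extra step --- that this reversal applied to an entire block-row, or its column analogue applied to a block-column, is a validity-preserving permutation of $G_{m,n}$ that flips a whole row or column of the direction matrix $D \in \mathbb{F}_2^{x\times y}$ --- is right, and the conclusion that the all-left configuration is reachable exactly for $D$ of the form $r\mathbf{1}^\top \oplus \mathbf{1}c^\top$ correctly settles the rank-one case (in particular single block-rows or block-columns). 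Two caveats. First, your ``precisely when'' needs a remark that the remaining operations of (*) (swapping whole block-rows/block-columns, transposition, color permutation, cyclic row shifts within a block-row) only permute or transpose $D$, hence cannot rescue a non-rank-one $D$ either; without this the ``only if'' direction of your dichotomy is unsupported, though nothing downstream depends on it. Second, and this is the genuine gap: for non-rank-one $D$ you have no construction at all, and your proposed fallback --- realizing the per-block counts $v_{c,i,j}$ by a single-direction coloring --- cannot work as stated, because those counts satisfy only the mixed-direction analogues of the necessary conditions, and the paper's Remark already shows such counting conditions are not sufficient even in the uniform-direction case. Indeed, the whole content of the conjecture is concentrated in exactly the case your method does not reach, and the obstruction you name (gaps synchronizing only under a uniform direction) is the same one the paper identifies as the reason the conjecture is open.
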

However, the main difficulty in verifying this is the necessary condition for a solution of ``both shifts", which gives the color distribution constraints. It is much more involved than that of ``left shift", because gaps occur differently. Proposition~\ref{proposition:sum} apparently still holds; Proposition~\ref{proposition:self} still holds because allowing different shift directions in the same column mounts to shuffling the appearances of gaps (between appearances of the same color in the same subgrid). The problem is Proposition~\ref{proposition:scalar_product}, where gaps between appearances in different subgrids are considered: when both subgrids are shifted in the same direction, all possible combinations of two colors of the same gap length (modulo subgrid size) are traversed; but when the subgrids are shifted in different directions the traversal does not happen, and to bound the scalar product is much more complex.
\subsection{$k$-coloring of $G_{k^2+k, k}$ with shift pattern}
It is known that $G_{k^2+k, k}$ is $k$-colorable. %citation here
We observed that for $k \in \{2, 3, 5\}$, there is a solution of $G_{k^2, k^2}$ with $k$-subgrids and each color occurring exactly $k$ times in each subgrid (so no color occurs twice in a subgrid-column or subgrid-row). This means that the solution can be extended to a solution of $G_{k^2+k, k}$ by filling the last $k$ rows with ``monochromatic stripes", as illustrated in Figure~\ref{figure: k2}.
\par
Therefore, we make the following conjecture:
\begin{conjecture}
For $k$ prime, there exists a $k$-coloring of $G_{k^2, k^2}$ with shift pattern on $k$-subgrids, such that each color occurs exactly $k$ times in each subgrid.\\
This implies that there is a $k$-coloring of $G_{k^2+k, k}$ with shifted $k$-subgrid on the $k^2$ by $k^2$ part, and monochromatic stripes on the last $k$ rows.
\end{conjecture}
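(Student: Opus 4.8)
The plan is to prove the first (and essential) sentence by exhibiting an explicit algebraic coloring, after which the extension is a short corollary. Identify the $k$ colors with the field $\mathbb{F}_k=\mathbb{Z}/k\mathbb{Z}$ — this is where primality of $k$ first enters, since we need a field — and index the rows of $G_{k^2,k^2}$ by pairs $(I,i)\in\mathbb{Z}_k\times\mathbb{Z}_k$ (global row $Ik+i$) and the columns by $(J,j)$. I would define
\[
\chi((I,i),(J,j)) = (j-i) + IJ \pmod{k}.
\]
First I would check that $\chi$ has the required shape. For fixed subgrid $(I,J)$ the color depends only on $(j-i)\bmod k$, so each subgrid is constant along its wrap-around diagonals; moving from local row $i$ to $i+1$ sends $j-i$ to $(j-1)-i$, so row $i+1$ is row $i$ shifted one step to the right, i.e.\ $\chi$ realizes a single-direction shift pattern on every subgrid (the direction is immaterial, being flipped by a row permutation as noted in Table~\ref{table:10_10_3}). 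Since $d\mapsto d+IJ$ is a bijection of $\mathbb{Z}_k$, each color occupies exactly one of the $k$ diagonals of a subgrid, hence occurs exactly $k$ times per subgrid, and every subgrid-row and subgrid-column is rainbow.

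The second step is a reduction that uses only this rainbow property. If the two rows of a candidate monochromatic rectangle lie in a common subgrid-row $I$, then in any column the two corresponding cells share a subgrid and a local column but differ in local row, so they get different colors and the rectangle cannot be monochromatic; the symmetric statement holds for two columns in a common subgrid-column. Hence every monochromatic rectangle must straddle two distinct subgrid-rows $I_1\neq I_2$ and two distinct subgrid-columns $J_1\neq J_2$, placing its four corners in four distinct subgrids — exactly the situation of Proposition~\ref{proposition:scalar_product}.

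The heart of the argument is to kill these inter-subgrid rectangles. Writing $\chi$ at the four corners and eliminating the common color through the two horizontal differences, monochromaticity forces $j_1-j_2$ to equal \emph{both} the top and the bottom $IJ$-difference; subtracting these yields the single scalar equation
\[
I_1J_1 - I_1J_2 - I_2J_1 + I_2J_2 = 0,
\]
that is $(I_1-I_2)(J_1-J_2)=0$. As $I_1\neq I_2$, $J_1\neq J_2$, and $\mathbb{F}_k$ has no zero divisors, this is impossible, so no monochromatic rectangle exists. This is precisely where primality is indispensable: for composite $k$ the choice $c(I,J)=IJ$ admits zero divisors and breaks, which is consistent with the conjecture being stated only for prime $k$ (and with the observed cases $k\in\{2,3,5\}$).

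For the implication I would \emph{not} stripe the full width — a full monochromatic row would collide with an old row, which repeats each color $k$ times — but instead restrict to one subgrid-column, the $k^2\times k$ block of columns $(J,j)$ for a fixed $J$. This block is rectangle-free as a sub-block, and crucially each of its $k^2$ rows is rainbow, since row $(I,i)$ shows the values $(j-i)+IJ$, $j\in\mathbb{Z}_k$, each once. Appending $k$ rows, the $t$-th being the constant color $t$, introduces no monochromatic rectangle (two stripes differ in color, and a stripe of color $t$ meets any old row in its unique cell of color $t$), giving a rectangle-free $k$-coloring of $G_{k^2+k,k}$ with the shift pattern above and monochromatic stripes on the last $k$ rows, recovering the bound of Fenner et al.~\cite{fenner2010rectangle}. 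The main obstacle I anticipate is not algebraic but the careful bookkeeping of the reduction step: one must confirm that the rainbow property truly eliminates every rectangle sharing a subgrid-row or subgrid-column, including degenerate alignments of the local indices, so that the clean criterion $(I_1-I_2)(J_1-J_2)\neq 0$ genuinely exhausts all remaining cases.
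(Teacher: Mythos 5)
The first thing to say is that this statement is not a theorem of the paper but one of its open conjectures: the authors prove nothing here, and their only evidence is that SAT solvers found colorings of the required form for $k\in\{2,3,5\}$ (Figure~\ref{figure: k2}). So there is no proof of the paper's to compare against; your proposal is an attempted resolution of an open problem, and its main part is correct. The coloring $\chi((I,i),(J,j))=(j-i)+IJ \bmod k$ gives every subgrid the paper's shift pattern, and since $d \mapsto d+IJ$ is a bijection of $\mathbb{Z}_k$, every color fills exactly one wrap-around diagonal of each subgrid, hence occurs exactly $k$ times there and exactly once in each local row and local column. This rainbow property rules out any monochromatic rectangle whose two rows lie in one subgrid-row or whose two columns lie in one subgrid-column (two of its cells then sit in a single subgrid, in a common local column resp.\ local row, and so already have different colors), and in the remaining case the two horizontal equalities give $j_1-j_2\equiv I_1(J_2-J_1)\equiv I_2(J_2-J_1)\pmod{k}$, forcing $(I_1-I_2)(J_1-J_2)\equiv 0$, which is impossible in $\mathbb{Z}_k$ for $k$ prime. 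The case analysis is exhaustive; the first sentence of the conjecture is thereby settled affirmatively, which is genuinely more than the paper has.

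The defect is in your handling of the second sentence. The paper's ``$G_{k^2+k,k}$'' is evidently a typo for $G_{k^2+k,k^2}$: a grid with only $k$ columns cannot contain a ``$k^2$ by $k^2$ part'', and both the caption of Figure~\ref{figure: k2} and the tight bound of Fenner et al.~\cite{fenner2010rectangle} that this is meant to recover concern $(k^2+k)\times k^2$ grids. Taking the typo at face value, what you prove is vacuous: every grid with at most $k$ columns is $k$-colorable by simply giving each column its own color, so your $k^2\times k$ restriction plus constant rows establishes nothing that needs the first part, and it does not recover the bound of Fenner et al. The intended implication must be proved over the full width, and your construction already supplies the needed ingredient: append $k$ rows to the $k^2\times k^2$ solution, and on new row $t$ make the stripe over subgrid-column $J$ monochromatic of color $t+J\bmod k$, i.e.\ arrange the stripes as a Latin square (this is exactly the pattern visible in Figure~\ref{figure: k2}). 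Then two new rows never form a rectangle (over a common subgrid-column their stripe colors differ; over distinct subgrid-columns each row's own two cells differ), and a new row carries a given color $c$ only on the single subgrid-column $J\equiv c-t$, inside which any old row contains $c$ exactly once by the rainbow property, so no rectangle involves an old row either. With that replacement your proposal proves the conjecture in full, including the intended extension.
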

\begin{figure}
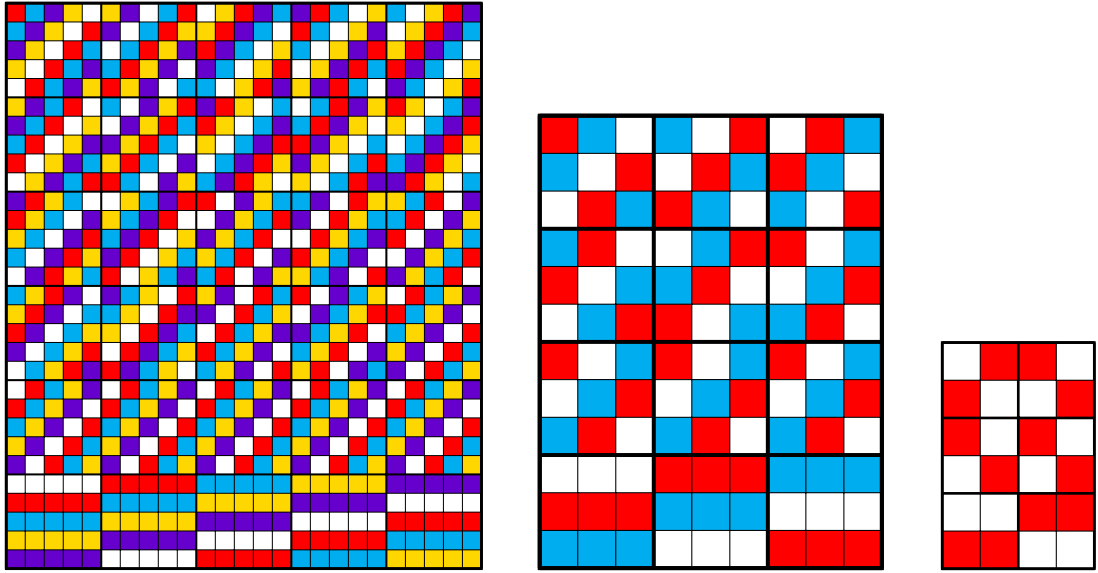

    \centering
\scalebox{0.25}{
    % [inline block 2: 3 envs, 32722 chars -> data_tex | \begin{tikzpicture}[every node/.style={minimum size=1cm-\pgflinewidth, outer sep=0pt}] \node[fill=red] at (0.5,29.5) {};...]

}
    \caption{$k$-coloring of $G_{k^2+k, k^2}$ with shift pattern}
    \label{figure: k2}
\end{figure}
\section{Conclusions}
We observed internal symmetries of monochromatic rectangle-free grid colorings, via classifying small grid examples of $G_{10, 10}$ and $G_{4, 4}$ according to solution symmetries. We then generalized the observed patterns; among them, the shift pattern was particularly effective and yielded new solutions for $G_{18, 18}$ with 4 colors. Encoding the shift pattern into SAT can greatly reduce the number of variables and thus the search space. We further analyzed the necessary conditions for a solution with the shift pattern (a single direction) to exist, which poses more effective constraints on the grid by dictating color distribution in each subgrid. Experiment on $G_{30, 25}$ shows that feasible color distribution solution greatly reduces the search time.
% I can make a table of more examples, if it's needed
\par
Equipped with the shift pattern, we attempted to find a 5-coloring of $G_{26, 26}$. The closest we got was an assignment that falsified two clauses. 
We also made attempts to prove or disprove the existence of a matrix color distribution solution by using the SMT solver \texttt{Z3}.
While the shift pattern seems universal in 3, 4, and 5-colorings, we expect that solving $G_{26, 26}$ may require additional constraints on the rows and columns that do not belong to any subgrids. 
% Not sure this is better
We raise further problems regarding the shift patterns: specifically, regarding the relationship between two and one-directional shift patterns and whether there exists $k-$coloring of $G_{k^2, k^2}$ with shift pattern on $k$-subgrids, such that each color occurs exactly $k$ times in each subgrid. 

\bibliographystyle{plain}
\bibliography{summary_bib}
\end{document}